\title{On the Complexity of Inductively Learning Guarded Clauses}
\author{Andrei Draghici}
\affiliation{%
  \institution{University of Oxford}
  \country{United Kingdom}
}
\email{andrei.draghici@stcatz.ox.ac.uk}
\author{Georg Gottlob}
\affiliation{%
  \institution{University of Oxford}
  \country{United Kingdom}
}
\email{georg.gottlob@cs.ox.ac.uk}
\author{Matthias Lanzinger}
\affiliation{%
  \institution{University of Oxford}
  \country{United Kingdom}
}
\email{matthias.lanzinger@cs.ox.ac.uk}
\begin{document}
\begin{abstract}
  We investigate the computational complexity of mining guarded
  clauses from clausal datasets through the framework of inductive
  logic programming (ILP). We show that learning guarded clauses is
  \np-complete and thus one step below the \sigmatwo-complete task of
  learning Horn clauses on the polynomial hierarchy. Motivated by
  practical applications on large datasets we identify a natural
  tractable fragment of the problem. Finally, we also generalise all
  of our results to $k$-guarded clauses for constant $k$.
\end{abstract}
\maketitle

\section{Introduction}
\label{sec:intro}

Finding complex relationships in large datasets with current data
analysis methods primarily relies on the computation of statistical
measures or simple aggregations over numerical data. Especially when
relationships between some parts of the data are conditional on other
data elements, standard methods such as correlation analysis reach
their limits. To better identify conditional relationships, one can alternatively approach the problem by trying to learn
logical relationships in the data, i.e., searching for non-trivial logical formulas
that hold true over the data set.

Learned logical formulas should ideally be intuitively interpretable
as well as usable in automated reasoning.  A natural fit for these
requirements are logical rules (i.e., Horn clauses) as used in Datalog~\cite{DBLP:books/cs/Ullman89} and the associated task of finding rules that
are satisfied in a dataset is often called \emph{rule mining}~\cite{DBLP:series/cogtech/FurnkranzGL12}. Not
only are modern rule-based languages highly expressive and theoretically well understood, but there also
exist real-world systems that are capable of efficient reasoning over
large amounts of data, e.g.~\cite{DBLP:journals/pvldb/BellomariniSG18}. In combination with such reasoning systems, learned rules can also be
used for data synthesis, anomaly detection, and similar tasks. 

This paper investigates the underlying theoretical problems when
viewing the rule mining problem through the lens of inductive logic
programming (ILP). Recall, in the ILP setting we are given a set of
positive examples and a set of negative examples and wish to find some
logical formula (the hypothesis) that subsumes all of the positive
examples while not subsuming any of the negative examples. We are
particularly interested in the computational complexity of the problem
and therefore consider the associated decision problem, also referred
to as the \emph{ILP consistency problem}, whether any solution exists
for the given sets of examples. 

The complexity of the ILP consistency problem depends critically on
the logical languages that are used for examples and the
hypothesis. In line with our motivation of learning rules, particular
historical attention has been given to the case where examples and
hypotheses are Horn clauses, for which the ILP consistency problem was shown to be
\sigmatwo-complete~\cite{DBLP:conf/ilp/GottlobLS97}.
While some problems in \np can now be solved efficiently in practical
settings, the second level of the polynomial hierarchy is still
generally out of reach for even moderately sized problems.  In the
context of mining rules from large datasets, \sigmatwo-completeness is
therefore clearly prohibitive and limits potential applications to
only small and simple scenarios.

To address this challenge we consider the learning of \emph{guarded
  clauses}, i.e., we restrict the language of the hypothesis to only
guarded clauses. Informally, a clause is said to be \emph{guarded} if
there exists one literal in the clause that \emph{covers} all
variables that occur in the clause. Our interest in guarded clauses is
motivated from two sides. First, as discussed below, we see that
learning guarded clauses can reduce the complexity of the
problem. Second, in common rule-based languages, e.g., Datalog or
Datalog$^\pm$~\cite{DBLP:conf/pods/CaliGL09}, programs consisting only
of guarded rules also enjoy various desirable computational
properties~\cite{DBLP:journals/tocl/GottlobGV02,
  DBLP:conf/pods/CaliGL09}\footnote{Guarded rules in the context of
  Datalog$^\pm$ and guarded clauses have slight technical
  differences. These differences are
  inconsequential to our results which all also apply to learning
  guarded rules in way the term is used in Datalog$^\pm$, see the discussion in Section~\ref{subsec:alg}.}. While guarded rules are of particular interest in practice, we present our results for the more general problem of learning guarded clauses, i.e., disjunctive rules. However, all results hold also for Horn clauses. 

Importantly, guarded clauses and corresponding rule-based languages are still highly expressive. The guarded fragment of Datalog$^\pm$ captures popular logical languages, such as the description logics DL-Lite$_\mathcal{R}$ and $\mathcal{EL}$ which form the theoretical foundation of the OWL 2 QL and EL profiles, respectively~\cite{DBLP:journals/ws/CaliGL12}. In addition to the expressiveness, reasoning over guarded clauses can be more efficient than over general clauses~\cite{DBLP:journals/ws/CaliGL12}. The following example aims to further clarify our setting.
\begin{example}
  \label{ex1}
  Consider the following clauses, which make up a set of positive examples $\{E^+_1,E^+_2\}$ and the single negative example $E^-_1$ (with constants $a,b,c,d,e$).
\[
  \begin{array}{l}
    E^+_1:  \neg \mathsf{TalkAbout}(\mathit{a}, \mathit{b}, \mathit{a}) \lor
             \neg \mathsf{FanOf}(\mathit{a}, \mathit{a}) \lor\\ \qquad\;
             \neg \mathsf{Influences}(\mathit{a}, \mathit{b}) \lor
             \mathsf{FanOf}(\mathit{b}, \mathit{a}),
\\ \\
    E^+_2:    \neg \mathsf{TalkAbout}(\mathit{a}, \mathit{c}, \mathit{d}) \lor 
             \neg \mathsf{FanOf}(\mathit{a}, \mathit{d}) \lor \\ \qquad\;
             \neg \mathsf{Influences}(\mathit{a}, \mathit{c}) \lor
             \mathsf{FanOf}(\mathit{c}, \mathit{d}) \lor \; \mathsf{Parent}(\mathit{c}, \mathit{b})
    \\ \\
    E^-_1: \neg \mathsf{TalkAbout}(\mathit{d}, \mathit{b}, \mathit{e}) \lor
             \neg \mathsf{Influence}(\mathit{d}, \mathit{b}) \lor \\ \qquad\;
             \mathsf{FanOf}(\mathit{d}, \mathit{e})
  \end{array}
\]
The following guarded clause (written as a rule) is a natural ILP-hypothesis that can be learned from the above examples.
\[
  \begin{array}{ll}
    \mathsf{FanOf}(y,z) \leftarrow & \mathsf{TalkAbout}(x, y, z), \mathsf{FanOf}(x,z),\\ &  \mathsf{Influences}(x,y)
  \end{array}
\]
Intuitively, the rule can be interpreted as: if $x$ and $y$ talk about $z$, $x$ is a fan of $z$, and $x$ influences $y$, then $y$ will also be a fan of $z$.
Here, the atom $\mathsf{TalkAbout}(x, y, t)$ contains all three variables that are used in the clause, and therefore \emph{guards} the clause.
Alternatively, if we drop the guardedness requirement, the following clause would also be a valid hypothesis
\[
  \neg \mathsf{Influences}(x,y) \lor \mathsf{FanOf}(y, t)
\]
 Note that is difficult to interpret this clause intuitively, especially when viewed at as a rule: if $x$ influences $y$, then $y$ is a fan of $t$.
\end{example}

\paragraph{Related Work}%
Inductive logic programming and learning clauses in particular is a
classic theme of AI research and efficient algorithms for learning clauses have been the topic of a number of classic works, e.g.~\cite{DBLP:journals/ngc/Muggleton91,DBLP:conf/alt/MuggletonF90,DBLP:conf/icml/KietzL94}. 
The complexity of learning clauses has been studied in many other contexts and under a broad range syntactical restrictions. One particularly important case is the learning of Horn clauses. There, the ILP consistency problem is
$\sigmatwo$-hard~\cite{DBLP:conf/ilp/GottlobLS97}, even without the
presence of background knowledge. Despite the large body of work on learning clauses, to the best of our knowledge, guardedness has not yet been studied in this context.

Recently, the inductive synthesis of Datalog programs, using a variety
of different techniques, has received particular
attention~\cite{DBLP:conf/ijcai/SiRHN19,DBLP:journals/pacmpl/RaghothamanMZNS20}. The
resulting systems are able to synthesise complex logic programs,
including advanced use of recursion, background knowledge and
predicate invention.

However, our
motivation of rule mining in large datasets is not fully aligned with
this direction of research. In our setting we prefer learning many
individual rules instead of complex programs while also requiring
stronger upper bounds on the computational complexity to deal with
large inputs. Our results may be applicable to the synthesis of \emph{guarded} Datalog programs when considered in conjunction with some of the methods noted above.
For further details on new developments in the synthesis of logic programs and various other directions of current ILP research, we defer to a recent survey article~\cite{DBLP:conf/ijcai/CropperDM20}.

\paragraph{Our Contributions}
Our goal is to study the computational
complexity of learning guarded rules and find expressive but tractable fragments that can support real-world rule mining. We show that in the general
case, the problem is no longer \sigmatwo-hard but still
intractable. However, we are able to identify the source of the
intractability and demonstrate that an important fragment of the problem is
indeed tractable. The full contributions of this paper can be
summarised as follows:
\begin{enumerate}
\item First we show that guardedness in general improves the
  complexity in versus the restriction to Horn clauses. In particular,
  the problem moves one step down on the polynomial hierarchy from
  \sigmatwo to \np.
\item We show not only \np-completeness of the general problem but
  also for various common further restrictions such as bounded arity
  of relations and bounded number of variables in the hypothesis.
\item In response, we identify the notion of \straight clauses. We
  show that learning guarded clauses from a clausal dataset is
  tractable if the positive examples are straight. We present a
  polynomial time algorithm and furthermore, show that our algorithm
  also allows tractable enumerate of an class of representative
  solutions of the problem.
\item Finally, we argue that our results also apply analogously to the
  much more general case of guardedness by not one literal, but a
  constant number of literals (so-called $k$-guardedness).
\end{enumerate}

\paragraph{Structure} The rest of this paper is structured as
follows. We first recall important technical preliminaries in
Section~\ref{sec:prelims}. We demonstrate the \np-completeness of
learning guarded clauses in Section~\ref{sec:hardness}. Our tractable
algorithm, related theoretical results for learning from \straight
clauses, and the generalisation to $k$-guardedness is presented in
~\ref{sec:alg}. Concluding remarks and a brief discussion of possible
future work are given in Section~\ref{sec:conclusion}.

\section{Preliminaries}
\label{sec:prelims}

We assume that the reader is familiar with typical notions of
computational complexity as defined
in~\cite{DBLP:books/daglib/0018514}, in particular polynomial time
reductions and the first two steps of the polynomial hierarchy.

We begin by recalling some important standard notions of inductive logic
programming. We roughly follow the presentation
from~\cite{DBLP:journals/sigart/KietzD94}. We assume basic familiarity
with standard terminology of first-order logic as used, e.g.,
in~\cite{DBLP:books/daglib/0082098}.  In general, for a literal $L$ we will use
$\vars(L)$ to denote the set of variables occurring in the
atom. Likewise, we write $\termsf(L)$ for the set of all terms occurring
in $L$, and $\ar(L)$ for the arity of $L$.

Let $\mathcal{L}$ be the language of first order logic. An \emph{ILP
  setting} is a tuple
$\left<\mathcal{L}_B, \mathcal{L}_H, \vdash, \mathcal{L}_{E}\right>$
where $\mathcal{L}_B, \mathcal{L}_H, \mathcal{L}_E$ are subsets of
$\mathcal{L}$ and $\vdash$ is a correct provability relation.

The \emph{ILP consistency problem} for a ILP setting $\mathcal{S}$
denotes the following decision problem: Given sets
$B, \expos \cup \exneg$ where $B \subseteq \mathcal{L}_B$, and
$\expos,\exneg \subseteq \mathcal{L}_E$, does there exist a
\emph{hypothesis} $H \in \mathcal{L}_H$ such that
$B \cup \{H\} \vdash \expos$, $B \cup \{ H\} \not \vdash \exneg$, and
$B \cup \{H\} \not \vdash \bot$ ($B \cup H$ is consistent). In such an
instance, $B$ is commonly referred to as the \emph{background
  knowledge}, while \expos, \exneg are the positive and negative
\emph{examples}, respectively. Furthermore, we will refer to any such
hypothesis as a \emph{solution} to the respective instance.

Note that for a set of formulas $E$, $T \vdash E$ is short for
$T \vdash e$ for every $e \in E$, and $T \not \vdash E$ is short for
$T \not\vdash e$ for every $e \in E$. Hence, a valid hypothesis must
subsume all positive examples and none of the negative examples.

By slight abuse of notation we will also consider ILP settings where
the languages for the examples $\mathcal{L}_E$ is split into separate
languages for the positive and the negative examples. Thus $\expos$
above will consist of clauses from a language $\mathcal{L}_{E^+}$,
while $\exneg$ are clauses from a possibly different language
$\mathcal{L}_{E^-}$. The languages can of course still overlap and
this variation of ILP settings has no effect on other definitions.

As is common in the field, we will consider specifically languages of
first order clauses. It will be convenient to treat a clause as a set
of the individual disjuncts and we will do so implicitly throughout
this paper. In this spirit we also assume no duplicate literals in clauses. It will be useful to distinguish between a relation name occurring in a positive and a negative literal. We use the term \emph{signed relation name} to refer to a relation name together with the polarity of the literal in which it occurs.
Following common practice in the field, we focus on
\emph{$\theta$-subsumption} $\vdash_\theta$ as a provability
relation here. Recall, we say that a clause $\phi$ $\theta$-subsumes a
clause $\psi$, i.e., $\phi \vdash_\theta \psi$, if there exists a
variable substitution $\theta$ such that $\phi \theta \subseteq \psi$.
As we will always use $\vdash_\theta$ as the provability relation, we
will generally drop the subscript and use $\vdash$ to implicitly
represent $\theta$-subsumption.

We are now ready to recall the important result for Horn clauses
mentioned in the introduction. Let $\mathcal{S}_{HC}$ be the ILP
setting
$\left<\emptyset, \text{function-free Horn clauses}, \vdash_\theta,
  \text{ground clauses}\right>$.
\begin{proposition}[\cite{DBLP:conf/ilp/GottlobLS97}]
  The ILP consistency problem for $\mathcal{S}_{HC}$ is \sigmatwo-hard.
\end{proposition}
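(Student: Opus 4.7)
The plan is to reduce from the canonical \sigmatwo-complete problem $\exists X\, \forall Y\, \phi(X,Y)$ with $\phi$ a propositional formula (say in 3CNF). The quantifier alternation of ILP consistency matches naturally: the outer $\exists$ corresponds to the choice of hypothesis clause $H$, while within the consistency check the condition $H \not\vdash e^-$ contributes a universal quantification over substitutions $\theta$, which I will use to encode the inner $\forall Y$. The conditions $H \vdash e^+$ for positive examples are existential over substitutions and will serve primarily as structural constraints that force $H$ into a canonical form.

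Concretely, I would introduce, for each propositional $X$-variable, a pair of dedicated literal positions in $H$ encoding a truth assignment to $x_i$, and design a small set of ground positive examples $\expos$ so that the Horn clauses $H$ with $H \vdash \expos$ correspond, up to irrelevant renamings, exactly to truth assignments of $X$. I would then construct a single ground negative example $e^-$ over a small domain (including constants $0, 1$ for truth values together with auxiliary dummy constants) encoding the clauses of $\phi$, so that $H \not\vdash e^-$ holds iff no substitution of the remaining variables of $H$ into $\{0,1\}$ yields a literal-set contained in $e^-$. When $e^-$ is engineered so that such a containment corresponds exactly to a $Y$-assignment falsifying $\phi$ under the $X$-assignment read off from $H$, the condition $H \not\vdash e^-$ becomes precisely "$\forall Y\, \phi(X,Y)$".

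The main obstacle is the joint engineering of the examples: the positive examples must tightly force $H$ into the intended shape so that no degenerate hypothesis vacuously satisfies both conditions, and this must be done while respecting the Horn restriction that $H$ contains at most one positive literal. My plan is to push the combinatorial content of the reduction into the negative literals of $H$ and the single negative example, relegating the positive examples to rigid templates. A further subtlety is to block unintended $\theta$-substitutions that could short-circuit the encoding, for instance by mapping variables of $H$ to unrelated constants appearing in $e^-$; I would prevent this by attaching guard atoms built over pairwise distinct dummy constants, so that any subsumption witness is forced to respect the intended bijection between logical variables and truth values. Given that the problem is easily seen to lie in \sigmatwo, establishing this hardness completes the argument.
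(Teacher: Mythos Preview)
The paper does not prove this proposition at all: it is stated with a citation to \cite{DBLP:conf/ilp/GottlobLS97} and used as background, so there is no ``paper's own proof'' to compare against. Your sketch is therefore not redundant with anything in the present paper.

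As a plan, your outline is the natural and essentially correct one, and it mirrors the standard way such $\sigmatwo$-hardness results for bounded ILP consistency are obtained: reduce from $\exists X\,\forall Y\,\phi$, use the positive examples to rigidly force the shape of $H$ so that the remaining freedom in $H$ encodes an $X$-assignment, and exploit the universal quantification implicit in $H \not\vdash_\theta e^-$ to simulate the inner $\forall Y$. Two caveats are worth flagging if you intend to flesh this out. First, in the original result the $\sigmatwo$-hardness is established for the \emph{bounded} ILP consistency problem (with an explicit size bound on $H$); without such a bound one has to argue separately that no oversized or degenerate hypothesis escapes the intended encoding, and your ``rigid template'' positive examples must do real work here. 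Second, your proposal currently leaves the actual gadget construction entirely unspecified; the difficulty you identify (blocking unintended substitutions while staying Horn and function-free) is genuine and is where all the content of the proof lives, so as written this is a correct strategy but not yet a proof.
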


For the rest of this paper we are interested in the problem of
learning \emph{guarded} clauses.  We say that a clause $C$ is
\emph{guarded} if there is some literal $G \in C$ such that
$\bigcup_{R \in C} \vars(R) \subseteq \vars(G)$. Guardedness can also
be generalised to so-called \emph{$k$-guarded} clauses, where some
group of $k$ literals in the clause covers all variables of the
clause.  Much of this paper will focus on the ILP setting
$\mathcal{G} = \left<\emptyset, \text{function free guarded clauses},
  \vdash_\theta, \text{ground clauses}\right>$.
Since $\mathcal{G}$ does not allow for background knowledge, we will treat
instances for the ILP consistency problem of $\mathcal{G}$ simply as 
tuples of positive and negative examples. Some notes on the inclusion of background knowledge in our
setting are made in Section~\ref{sec:conclusion}.

\section{\np-Completeness of Guarded ILP Consistency}
\label{sec:hardness}

Note that the language of guarded clauses is incomparable to the language of Horn clauses. Our setting is therefore not simply a more restricted version of previously studied work but an alternative restriction, which we will show to have more desirable computational properties.

First, in this section, we show that the ILP consistency problem for $\mathcal{G}$ is \np-complete. This represents a significant improvement over the complexity of $\mathcal{S}_{HC}$. The proof details will allow us to identify a practically useful tractable fragment of the problem, which we present in next section.

We will first show a reduction from  the hitting string problem~\cite{DBLP:books/fm/GareyJ79} to establish \np-hardness. Afterwards we present an argument for \np membership. Note that membership of the problem is not straightforward. It is not trivial whether it is always sufficient to guess a polynomially bounded hypothesis. Furthermore, it is not easy to verify a guess in polynomial time since checking $\theta$-subsumption is \np-hard in general~\cite{DBLP:conf/icml/KietzL94,DBLP:books/fm/GareyJ79}.

\subsection{Reduction from Hitting String}
In the \emph{hitting string} problem we are given a finite set of
strings $S = \{ \mathbf{s_1}, \dots, \mathbf{s_m}\}$, all of length
$n$, over the alphabet $\{0, 1, *\}$. The task is to decide whether
there exists a \emph{binary} (over the alphabet $\{0,1\}$) string
$\mathbf{x} = x_1x_2\dots x_n$ such that for each
$\mathbf{s_i} \in S$, $\mathbf{s_i}$ and $\mathbf{x}$ agree in at
least one position. Such a string $\mathbf{x}$ is called a
\emph{hitting string} for the set $S$. The hitting string problem is
known to be \np-complete~\cite{DBLP:books/fm/GareyJ79}.

\begin{theorem}
  \label{thm:nphard}
  The ILP consistency problem for $\mathcal{G}$ is \np-hard. The problem remains \np-hard even when hypothesis and example languages are additionally restricted to clauses with any combination of the following properties:
  \begin{itemize}
  \item Horn,
  \item arity at most 2,
  \item at most 2 variables in the hypothesis,
  \item or no repetition of terms in a single literal.
  \end{itemize}
\end{theorem}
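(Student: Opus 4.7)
The plan is to prove \np-hardness via a polynomial-time reduction from the hitting string problem and to handle \np membership separately. For membership, the key observation is that once the guard literal of a guarded hypothesis $H$ has been fixed, checking $\theta$-subsumption against an example $E$ is polynomial: the image of the guard in $E$ fully determines the substitution on every variable of $H$ (since, by guardedness, the guard covers them all), so we need only iterate over the $O(|E|)$ candidate guard images and, for each, check in polynomial time whether the remaining literals of $H$ map into $E$. Combined with the standard argument that a valid hypothesis, when one exists, can be taken of size polynomial in the input (variables restricted to those of the guard, predicates and constants to those appearing in the examples), this yields a polynomial-time verifier for a guessed hypothesis, placing the problem in \np.

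For the hardness part, given a hitting string instance $S = \{\mathbf{s}_1, \ldots, \mathbf{s}_m\}$ with strings of length $n$ over $\{0, 1, *\}$, I would construct positive examples $\expos$ and negative examples $\exneg$ over fresh constants such that a valid guarded hypothesis exists if and only if $S$ admits a hitting string. The central idea is to introduce a guard predicate $G$ together with per-position bit-encoding predicates, and to represent a candidate binary string $\mathbf{x}$ by a hypothesis $H_{\mathbf{x}}$ whose non-guard literals record, for each position $j$, the chosen bit $x_j$. The negative example $N_i$ associated with each $\mathbf{s}_i$ is then designed so that $H_{\mathbf{x}}$ subsumes $N_i$ precisely when $\mathbf{x}$ fails to hit $\mathbf{s}_i$, i.e., $x_j \neq \mathbf{s}_i[j]$ at every non-$*$ position; the positive examples, in turn, are designed so that a valid hypothesis must structurally encode a single bit at every position. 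Correctness of the two directions then reduces to a direct case analysis: $\mathbf{x}$ hits every $\mathbf{s}_i$ iff $H_{\mathbf{x}}$ fails to subsume every corresponding $N_i$.

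The additional restrictions of the theorem (Horn, arity at most $2$, at most two variables in the hypothesis, no repetition of terms in a single literal) would be obtained by engineering the same construction so that all of them hold at once: the guard plays the role of the unique positive literal and the bit-encoding atoms are negated; all atoms are unary or binary; two variables $y_0, y_1$ represent the bit values $0$ and $1$, with the bit at position $j$ encoded by which of $y_0, y_1$ occurs in the $j$-th literal; and every literal uses distinct terms. The main obstacle I anticipate is preventing ``degenerate'' hypotheses -- ones that carry both bits (or no bit) at some position and could thereby vacuously satisfy the negative constraints -- from being valid, which would break the reverse direction. The plan is to handle this by designing the positive examples asymmetrically and exploiting the argument order inside the guard atom to rule out hypotheses with conflicting bit choices at any single position, while still leaving every consistent binary string expressible.
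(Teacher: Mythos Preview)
Your proposal is correct and follows essentially the same approach as the paper: a reduction from hitting string using a binary guard predicate $G$ together with per-position bit-encoding predicates, with auxiliary positive and negative examples forcing any valid hypothesis to pick exactly one bit per position, and one negative example $N_i$ per string $\mathbf{s}_i$ designed so that subsumption of $N_i$ corresponds to failing to hit $\mathbf{s}_i$. Two small remarks: the theorem concerns only \np-hardness (membership is treated separately in the paper), so the membership paragraph in your proposal is extraneous here; and where you plan to build the construction so that all extra restrictions hold simultaneously, the paper instead gives a construction with all-positive literals and observes afterwards that flipping polarities and adding a single dummy positive literal yields the Horn case without affecting the argument.
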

\begin{proof}
  Proof is by reduction from the hitting string problem. Let
  $S= \{ \mathbf{s_1}, \dots, \mathbf{s_m}\}$ be set of length $n$
  strings over the alphabet $\{0,1,*\}$. Let the positive examples
  $\expos$ be the set $\{ C_i \mid 0 \leq i \leq n \}$ with the
  respective clauses defined as
\[
  \begin{array}{ll}
    C_0 = & G(a,b) \lor \bigvee_{1\leq j \leq n} \left( A_j(a) \lor B_j(b) \right) \\[1.2ex]
    C_i = & G(a,b) \lor G(b,a) \lor A_i(a) \lor B_i(a) \lor \\
          & \bigvee_{1 \leq j \leq n, j \neq i} \left( A_j(a) \lor A_j(b) \lor B_j(a) \lor B_j(b) \right)
  \end{array}
\]
For string $\mathbf{s_i} \in S$, we write $s_{i,j}$ to refer to the symbol at position $j$ of the string. 
Let set of negative examples $\exneg$ be the set $\{D_i \mid 0 \leq i \leq n\} \cup \{ N_i \mid 0 \leq i \leq m\}$ where the individual clauses are as follows
\[
  \begin{array}{ll}
    D_0 = & \bigvee_{1\leq j \leq n} \left( A_j(a) \lor A_j(b) \lor B_j(a) \lor B_j(b)\right) \\[1.2ex]
    D_i = & G(a,b) \lor \bigvee_{1 \leq j \leq n, j \neq i} \left( A_j(a) \lor B_j(b) \right) \\[1.2ex]
    N_i = & G(a,b) \lor \bigvee_{j: s_{i,j}=1} B_j(b) \lor \bigvee_{j: s_{i,j}=0} A_j(a)  \\
          & \lor \bigvee_{j: s_{i,j}=*}\left(  A_j(a) \lor B_j(b) \right)
  \end{array}
\]

Now suppose there is a function-free guarded clause $H$ such that $H \vdash \expos$ and $H \vdash \exneg$. We first make some observations on the structure of $H$ and then argue how this connects to the original hitting string instance.
First, observe that any hypothesis $H$ must contain some atom for relation $G$ to not subsume $D_0$. On the other hand, by $C_0$, we see that both positions must contain different variables and only one of $G(x,y)$ and $G(y,x)$ is in  $H$. Hence, w.l.o.g., we assume only $G(x,y)$ is in $H$.

Since $H$ must subsume $C_0$, there can be no atoms $A_i(y)$ or $B_i(x)$ in $H$. In consequence, since $D_i$ may not be subsumed, at least one of $A_i(x)$ or $B_i(y)$ must occur in $H$. Indeed, because $H$ needs to subsume $C_i$ it follows that only exactly one of $A_i(x)$ or $B_i(y)$ is in $H$ for every $1 \leq i \leq n$. Hence, we know that $H$ is always of the form
\[
  G(x,y) \lor \bigvee_{1 \leq i \leq n} \Gamma_i,\quad \text{where } \Gamma_i \text{ is either } A_i(x) \text{ or } B_i(y)
\]
Note that $H$ is clearly always guarded.
We can associate a binary string $\mathbf{x}(H)$ of length $n$ to each clause $H$ of this form by simply setting position $i$ of the string to be $1$ if $A_i(x) \in H$, or $0$ if $B_i(y) \in H$. 

We are now ready to show the correctness of the reduction, in particular we argue that $\mathbf{x}(H)$ is a hitting string for $S$ if and only if $H \vdash \expos$ and $H \not\vdash \exneg$.

Assume $\mathbf{x}$ is a hitting string for $S$ and let $H$ be the clause associated to $\mathbf{x}$. From our arguments above we know that the only thing left to check is whether $H$ does not subsume any of the $N$ clauses. Every string $\mathbf{s}_i\in S$ agrees with $\mathbf{x}$ on some position, i.e., $\mathbf{s}_{i,j} = \mathbf{x}_j$. If such a $\mathbf{s}_{i,j}$ is $1$, then $A_j(x) \in H$ but $A_j(a) \not\in N_i$. The analogous reasoning holds for the case where the position contains $0$. Since $\mathbf{x}$ is a binary string, these are the only two options and we see that for every string $\mathbf{s}_i \in S$, the clause $N_i$ is not subsumed by $H$. Hence, $H \vdash \expos$ and $H \not \vdash\exneg$.

For the other side of the implication, suppose there is a guarded clause $H$ with $H \vdash \expos$, $H \not \vdash \exneg$. From our arguments before we know the structure of $H$ and that the clause determines a unique binary string $\mathbf{x}(H)$. Now, for every $0 \leq i \leq m$ we have that $H \not \vdash N_i$. By inspection of $H$ and $N_i$ it follows that for some $0 \leq j \leq n$, either $A_j(x) \in H$ and $\mathbf{s}_{i,j}=1$, or $B_j(y) \in H$ and $\mathbf{s}_{i,j}=0$. In either case, by definition of $\mathbf{x}(H)$ we see that the $j$-th position of $\mathbf{x}(H)$ agrees with the value of $\mathbf{s}_{i,j}$. Since this holds for all positions, it follows that $\mathbf{x}(H)$ agrees with every string in $S$ at some position, i.e., $\mathbf{x}(H)$ is a hitting string for $S$.

This establishes the \np-hardness for $\mathcal{G}$. By inspection of the clauses in the reduction, we can deduce that the problem remains hard under the stated additional restrictions
All of the clauses are positive, i.e., polarity plays no role in the argument. It is also easy to adapt each clause to be Horn instead
(make all atoms in the above reduction negative, add a new redundant literal $P(a)$ to every clause).
Lastly, only arity 2 atoms, 2 variables in $H$ and 2 constants in the examples are necessary.
\end{proof}

Note that our reduction also demonstrates that the restriction to guardedness is not itself a source of difficulty. The above argument does not require the restriction to guarded hypothesis. It follows from our observations on the form of solutions to our reduction that any clause that subsumes all of $\expos$ and none of $\exneg$ must be guarded.

\subsection{\np Membership}
We first establish that it is sufficient to guess a polynomially bounded hypothesis. Note that this is not true in every setting and previous work has instead regularly considered the \emph{bounded ILP consistency problem}~\cite{DBLP:conf/pkdd/AlphonseO09,DBLP:conf/ilp/GottlobLS97}, where explicit constraints on the hypothesis size are considered as part of the decision problem. 

While guardedness still allows for exponentially large solutions, most literals in such a solution will be  redundant. In particular, we can observe that if a guarded hypothesis does not subsume a clause, then there are only a few concrete variable assignments for which subsumption must fail in addition to those cases  where the guard itself can not be  subsumed. A solution then needs only one literal per such case to guarantee non-subsumption of the clause. We can therefore observe the following bound (a full argument is given in the technical appendix).
\begin{lemma}
    \label{lem:smallH}
  Assume the instance $\expos, \exneg$ for the ILP consistency problem for $\mathcal{G}$ has a solution. Then it also has a solution $H$ with $|H| \leq 1+\sum_{E \in \exneg} |E|$.
\end{lemma}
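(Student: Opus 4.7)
The plan is to start from an arbitrary solution $H$ and prune it down to a small solution $H'$ by retaining the guard together with one ``witness'' literal per dangerous variable assignment per negative example. First I would fix a guard literal $G \in H$ (which exists by guardedness of $H$) and note that, because $\vars(H) \subseteq \vars(G)$, the guard pins down any candidate substitution: for every ground atom $g$ with the same signed relation name as $G$, there is at most one substitution $\theta$ with $G\theta = g$, since repeated variables of $G$ force consistent choices of terms from $g$.

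Next, for each $E \in \exneg$, I would enumerate the set of ``dangerous'' substitutions $\Theta_E = \{\theta : G\theta \in E\}$. By the previous observation, $|\Theta_E| \le |E|$. Since $H \not\vdash E$, for every $\theta \in \Theta_E$ there must exist at least one literal $L_E^\theta \in H$ with $L_E^\theta\theta \notin E$ (otherwise $\theta$ would actually witness $H\theta \subseteq E$). Picking one such witness per pair $(E,\theta)$, define
\[
  H' \;=\; \{G\} \;\cup\; \bigcup_{E \in \exneg} \{\, L_E^\theta : \theta \in \Theta_E \,\}.
\]

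It then remains to check that $H'$ is a valid solution. Guardedness holds because $G \in H'$ and every other literal of $H'$ was drawn from $H$, hence has its variables already contained in $\vars(G)$. Since $H' \subseteq H$, any substitution that sends $H$ into a positive example also sends $H'$ into it, so $H' \vdash \expos$. For $H' \not\vdash \exneg$, take any $E \in \exneg$ and any substitution $\theta$: if $G\theta \notin E$, the guard itself blocks subsumption; otherwise $\theta \in \Theta_E$ and the chosen $L_E^\theta$ blocks it by construction. Counting literals gives $|H'| \le 1 + \sum_{E \in \exneg} |\Theta_E| \le 1 + \sum_{E \in \exneg} |E|$.

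The only mildly subtle point, and the one I would take care to state precisely, is the claim that the guard uniquely determines the substitution on $\vars(H)$; this is where guardedness enters essentially, and without it one would have to argue over exponentially many extensions of a partial substitution. Everything else is straightforward bookkeeping and monotonicity of $\theta$-subsumption under shrinking the left-hand clause.
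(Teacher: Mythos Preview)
Your proposal is correct and follows essentially the same argument as the paper: fix a guard, use guardedness to bound the number of candidate substitutions into each negative example by $|E|$, and retain one blocking literal per such substitution together with the guard. The only cosmetic difference is that the paper builds the small hypothesis iteratively over the negative examples while you construct it in one shot; the underlying idea and the size count are identical.
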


The natural guess and check procedure for the ILP consistency problem
is to guess a hypothesis and then check (non-)subsumption for each of
the examples. However, in general this does not result in an $\np$
algorithm since deciding whether $\phi \vdash \psi$ is \np-complete
even for function-free Horn clauses~\cite{DBLP:conf/icml/KietzL94}.

In the guarded setting, subsumption can be decided more efficiently.
It is not difficult to prove directly that subsumption by a
function-free guarded Horn clause is tractable. Due to our later
interest in $k$-guardedness it will be more convenient to adapt a more
general result on the complexity of finding homomorphisms.

It is straightforward to translate the problem $\phi \vdash \psi$,
where $\phi$ is a function-free clause, to a the question of whether
there exists a homomorphism from a relational structure representing
$\phi$ to a structure representing $\psi$. The computational complexity of homomorphism checking is well understood~\cite{DBLP:journals/jacm/Grohe07,DBLP:journals/jcss/GottlobLS02}, and by applying this knowledge to our setting we can make the following important observation.

\begin{lemma}
 \label{prop:kguardsubsumption}
 Let $k$ be an integer constant, let $\phi$ and $\psi$ be clauses such
 that $\phi$ is $k$-guarded. Then it is tractable to decide whether
 $\phi \vdash_\theta \psi$.
\end{lemma}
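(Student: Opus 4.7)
The plan is to reduce $\theta$-subsumption to a homomorphism problem between relational structures, and then to appeal to the classical tractability results for bounded (generalized) hypertree width cited in the excerpt.

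First, I would make the translation precise. Given a clause $\phi$, I build a relational structure $\mathbf{A}_\phi$ whose universe is $\vars(\phi)$ and whose signature contains one relation symbol $R^+$ and one relation symbol $R^-$ for every relation name $R$ appearing in $\phi$ or $\psi$ (that is, one symbol per \emph{signed} relation name, so that polarity is respected). For each literal $L \in \phi$ with signed relation name $R^\pm$ and argument tuple $\bar{t}$, I include $\bar{t}$ in the interpretation of $R^\pm$ in $\mathbf{A}_\phi$. The structure $\mathbf{A}_\psi$ is defined analogously, with universe $\termsf(\psi)$. Since all clauses are function-free, a $\theta$-substitution from $\phi$ into $\psi$ is exactly a map $\vars(\phi) \to \termsf(\psi)$ that sends every tuple in $\mathbf{A}_\phi$ to a tuple in $\mathbf{A}_\psi$ under the same signed relation symbol; i.e., $\phi \vdash_\theta \psi$ iff there is a homomorphism $\mathbf{A}_\phi \to \mathbf{A}_\psi$. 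Both structures are polynomial-sized in the input.

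Next I would observe how $k$-guardedness of $\phi$ shows up structurally in $\mathbf{A}_\phi$. Fix $k$ literals $G_1,\dots,G_k$ of $\phi$ whose variables together cover $\vars(\phi)$, and consider the hypergraph $\mathcal{H}_\phi$ whose vertices are $\vars(\phi)$ and whose hyperedges are the variable sets $\vars(L)$ for $L \in \phi$. The trivial one-node tree decomposition whose only bag is $\vars(\phi)$ itself is covered by the $k$ hyperedges $\vars(G_1),\dots,\vars(G_k)$, so $\mathcal{H}_\phi$ has generalized hypertree width at most $k$. Because $k$ is a fixed constant, $\mathbf{A}_\phi$ belongs to a class of bounded (generalized) hypertree width.

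Finally, I would invoke the cited tractability results for homomorphism/CSP under bounded hypertree width~\cite{DBLP:journals/jcss/GottlobLS02,DBLP:journals/jacm/Grohe07}: checking whether $\mathbf{A}_\phi \to \mathbf{A}_\psi$ can be done in polynomial time whenever the source structure has bounded hypertree width. Combining this with the equivalence from the first step yields the lemma.

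The main obstacle, and essentially the only delicate point, is the structural step: one must argue that $k$-guardedness really does translate into bounded (generalized) hypertree width of the associated hypergraph, rather than e.g.\ bounded treewidth, since the arities of literals in $\phi$ are not bounded and guards can be wide. Once that observation is in place, the signed-relation encoding and the invocation of the cited homomorphism-tractability theorem are routine.
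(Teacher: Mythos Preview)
Your proposal is correct and matches the paper's approach: the paper likewise reduces $\theta$-subsumption to a homomorphism problem between relational structures encoding $\phi$ and $\psi$, and then appeals to the cited results of Gottlob--Leone--Scarcello and Grohe on tractable homomorphism checking. In fact the paper gives only a one-paragraph sketch (``straightforward to translate \dots applying this knowledge''), so your write-up is more explicit than the original---in particular, your observation that the trivial one-bag decomposition witnesses (generalized) hypertree width at most $k$ is exactly the structural point the paper leaves implicit.
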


Naturally, the complementary problem $\phi \not \vdash \psi$ is tractable under the same circumstances. Since we know from Lemma~\ref{lem:smallH}, that it is sufficient to guess a polynomially bounded $H$, it is clear that the ILP consistency problem for $\mathcal{G}$ is in \np. In combination with Theorem~\ref{thm:nphard} we arrive at the main result of this section.

\begin{theorem}
  The ILP consistency problem for $\mathcal{G}$ is \np-complete.
\end{theorem}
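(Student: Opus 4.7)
The plan is to combine the hardness result from Theorem~\ref{thm:nphard} with a guess-and-check argument for \np-membership. Hardness is immediate: Theorem~\ref{thm:nphard} already shows \np-hardness of the ILP consistency problem for $\mathcal{G}$, so the only real work is to exhibit a nondeterministic polynomial-time decision procedure.

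For membership I would follow the textbook guess-and-check approach. Nondeterministically guess a function-free guarded clause $H$, then verify that $H \vdash E$ for every positive example $E \in \expos$ and $H \not\vdash E$ for every negative example $E \in \exneg$. Two standard obstacles normally block this route, and the preceding lemmas are tailored to remove them. First, candidate hypotheses can in principle be exponentially large; Lemma~\ref{lem:smallH} lets me restrict attention to hypotheses with at most $1 + \sum_{E \in \exneg} |E|$ literals, which is linear in the input. Because the relation symbols and constants available to $H$ are drawn from the input signature and the number of variables of $H$ is bounded by $|H|$ itself, this yields a polynomially bounded guess space. Second, $\theta$-subsumption is \np-hard in general, but Lemma~\ref{prop:kguardsubsumption} (instantiated with $k = 1$) states that subsumption by a guarded clause is tractable, so each positive and each negative check runs in polynomial time.

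Putting the two lemmas together gives a nondeterministic polynomial-time algorithm, placing the problem in \np; with Theorem~\ref{thm:nphard} this yields \np-completeness. The main point is conceptual rather than technical: both lemmas are already in hand, so the proof is a short assembly. The only mildly delicate step is to confirm that the polynomial bound on the number of literals of $H$ really does translate to a polynomial bound on the total encoding size of $H$, which follows from the observations above on the signature and variables of $H$.
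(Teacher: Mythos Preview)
Your proposal is correct and mirrors the paper's argument exactly: hardness from Theorem~\ref{thm:nphard}, and membership by guess-and-check using Lemma~\ref{lem:smallH} to bound the number of literals in the hypothesis and Lemma~\ref{prop:kguardsubsumption} (with $k=1$) to make each subsumption test polynomial. The one detail you flag is handled most cleanly by noting that, since $H$ is guarded, all its variables already occur in the guard literal, whose arity is bounded by the input signature---so the variable count is bounded by the input independently of $|H|$, and the total encoding size is polynomial as you claim.
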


\section{Tractability for Straight Positive Examples}
\label{sec:alg}

In the previous section we have seen that considering guarded clauses
instead of Horn clauses improves the complexity of the consistency problem form $\sigmatwo$-completeness to $\np$-completeness. However, in the context of our original motivation
of rule mining on large datasets this may still be infeasible. Furthermore,
Theorem~\ref{thm:nphard} illustrates that even severe restrictions to
the clause languages are insufficient to achieve a polynomial time
algorithm (assuming $\np \neq \ptime$).

In this section we introduce the notion of \straight
clauses which will allow us to define a tractable fragment of the problem that can still express many common notions, such as recursive rules. We then prove some important theoretical properties under this new restriction and ultimately give a polynomial time algorithm. Finally, we briefly discuss how key results and our algorithm generalise to $k$-guardedness.

\begin{definition}{Straight Clauses}
  A clause $\phi$ is called \emph{\straight} if no relation symbol occurs twice in $\phi$ with the same polarity.
\end{definition}

Note that \straight clauses can still express recursive rules since the same relation can appear twice with differing polarity. For example, all clauses and the resulting hypothesis in Example~\ref{ex1} are \straight clauses. In the context of Datalog rules, \straight rules correspond to rules with no self-joins.

It will not be necessary to restrict all clauses to be \straight. Indeed, we will only require positive examples to be \straight clauses, whereas the negative clauses and the hypothesis will not need to be \straight. To that end we define the new ILP setting $\mathcal{G}^*$ as the setting $\mathcal{G}$ where the positive examples additionally have to be \straight clauses.

\subsection{Guarded Hypotheses for Straight Clauses}
\label{sec:det}

We first introduce some new notation. In a clause $C$ where a signed relational name $P$ occurs only once we will use $\pi^C_{P,i}$ to denote the term at the $i$-th position of the literal corresponding to $P$.
\begin{definition}[Relative Shields]
  Let $Q$ and $P$ be relational symbols occurring in a \straight clause $C$. The \emph{relative shields} of position $i$ of $Q$ are defined as the set 
  \[
  S_{P,Q,i}^C = \{ j \mid \pi^C_{Q,j} = \pi^C_{P,j} \}
  \]
  That is, the relative shields are all the positions of $P$ that contain the exact same term as position $i$ of $Q$.
  
  Let $E$ be a set of \straight clauses that all contain $P$ and $Q$. The relative shield of position $i$ of $Q$ by $P$ (w.r.t. $E$) is the set 
  \[
  S^E_{P,Q,i} = \bigcap_{C\in E} S_{P,Q,i}^C
  \]
\end{definition}
\begin{example}
  Consider the following two clauses
  \[
    \begin{array}{l}
      P(a,a,b,a),\, Q(a)  \\
      P(a,b,a,b), \, Q(b) 
    \end{array}
  \]
  The relative shield of position 1 of $Q$ by $P$ is $\{1,2,4\}$ in the first clause and $\{2,4\}$ in the second clause. Thus, for the set of clauses we have the relative shield $\{2,4\}$. In situations where $P$ is the guard of the hypothesis $H$ the relative shield corresponds to the variables from the guard clause in $H$ that can be used in the respective position of $Q$. That is, if $P(x,y,x,z)$ is the guard in $H$, then only $Q(y)$ or $Q(z)$ are possible candidates for the hypothesis.
\end{example}

The following lemma formalises the observation made at the end of the example. It shows how the relative shield and variable choice are tightly connected in guarded clauses.
\begin{lemma}
\label{lem:shielding}
    Let $E$ be a set of \straight clauses and let $H$ be a clause with guard $P$ such that $H \vdash E$. For every literal $Q(x_1,\dots,x_n) \in H$, we have that for every $1 \leq i \leq \operatorname{ar}(Q)$, if $x_i = \pi^H_{P,j}$, then $j \in S_{P,Q,i}$. (The relative shield is again considered with respect to $E$).
\end{lemma}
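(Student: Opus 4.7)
The plan is to unpack the $\theta$-subsumption witness for each example separately and to exploit the straightness of the examples to force the behaviour of that witness. Fixing an arbitrary $C \in E$, there is a substitution $\theta_C$ with $H\theta_C \subseteq C$. I would argue that because $C$ is straight, there is a unique literal of $C$ with signed relation name $P$ and a unique one with signed relation name $Q$, so the images under $\theta_C$ of the guard literal of $H$ and of the literal $Q(x_1,\dots,x_n) \in H$ are completely determined. Incidentally, this also justifies that $P$ and $Q$ must occur in every clause of $E$, so the relative shields appearing in the statement are well defined.

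Comparing these forced images position by position then yields the two key identities
\[
\theta_C(\pi^H_{P,j}) = \pi^C_{P,j} \qquad \text{and} \qquad \theta_C(x_i) = \pi^C_{Q,i}
\]
for all relevant indices. Under the hypothesis $x_i = \pi^H_{P,j}$, applying $\theta_C$ to both sides and combining the two identities gives $\pi^C_{Q,i} = \pi^C_{P,j}$, and by definition this places $j$ in $S^C_{P,Q,i}$. Since $C$ was chosen arbitrarily, the membership holds for every $C \in E$, so taking the intersection yields $j \in S^E_{P,Q,i}$, as required.

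The argument is essentially a careful unpacking of definitions rather than a delicate combinatorial construction. The main point I would need to be precise about is that it is straightness of the \emph{examples} (rather than of $H$) that forces the image of the guard literal under $\theta_C$ to be unique; this is the structural feature that does all the work and is precisely what fails for non-straight examples, motivating the restriction to $\mathcal{G}^*$.
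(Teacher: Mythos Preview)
Your proposal is correct and follows essentially the same approach as the paper: both exploit straightness of the examples to pin down the images of the $P$- and $Q$-literals under any witnessing substitution, then compare positions to force $\pi^C_{P,j}=\pi^C_{Q,i}$. The paper phrases the same observation as a proof by contradiction, whereas your direct argument (and your explicit remark that $P,Q$ must occur in every $C\in E$) is arguably cleaner.
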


We will see that beyond Lemma~\ref{lem:shielding}, the relative shield will play a special role if we consider a particular guard, namely the the least general one.

\begin{definition}[Least General Induced Guard]
  Let $E$ be a set of \straight ground clauses that all contain the signed relation name $P$. The least general induced guard (lgig) for $P$ (induced by $E$) is the literal $P(\bar{x})$ with the least number of distinct variables such that $P(\bar{x}) \vdash \{P(\bar{a})\}$, for all $P(\bar{a})$ in any clause in $E$.
\end{definition}

It is easy to observe that the lgig for $P$ has the same variables exactly in those positions that contain the same term in every clause. That is, $x_i=x_j$ if and only if $\forall C \in E\colon \pi^C_{P,i} = \pi^C_{P,j}$, where $x_k$ refers to the variable in position $k$ of the lgig. It follows that the lgig is unique up to isomorphism.

In the next step, We will show how using an lgig as guard can determine the exact argument list of all literals in the hypothesis. This is an important observation for our eventual algorithm. Furthermore, it will allow us to show that it is always sufficient to consider only lgigs as guards in our scenario.

\begin{lemma}
  \label{lem:uniqshield}
  Let $E$ be a set of \straight ground clauses and let $H$ be a clause guarded by the lgig for some signed relation name $P$ such that $H \vdash E$. For every literal $Q(\dots) \in H$, we have that for every $i \in \mathit{ar}(Q)$ and every $j,k \in S^E_{P,Q,i}$ we have $\pi^H_{P,j}=\pi^H_{P,k}$.
\end{lemma}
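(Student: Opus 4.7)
The plan is to prove the claim by unfolding the definition of the relative shield and then invoking the characterization of the lgig stated immediately after Definition~4.3 (``$x_j = x_k$ iff $\forall C \in E\colon \pi^C_{P,j} = \pi^C_{P,k}$''). First I would fix a literal $Q(\dots) \in H$, a position $i$ with $1 \leq i \leq \ar(Q)$, and indices $j, k \in S^E_{P,Q,i}$. Since $S^E_{P,Q,i} = \bigcap_{C \in E} S^C_{P,Q,i}$, both $j$ and $k$ lie in $S^C_{P,Q,i}$ for every $C \in E$, which (by the definition of the per-clause relative shield, read from the example) means $\pi^C_{P,j} = \pi^C_{Q,i}$ and $\pi^C_{P,k} = \pi^C_{Q,i}$. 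Here I rely on the fact that each $C \in E$ is \straight, so $P$ occurs at most once in $C$ and the notation $\pi^C_{P,\cdot}$ is well-defined. Chaining the two equalities through $\pi^C_{Q,i}$ yields $\pi^C_{P,j} = \pi^C_{P,k}$ for every $C \in E$.

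Next, I invoke the characterizing property of the lgig: writing the lgig as $P(x_1, \dots, x_n)$, one has $x_j = x_k$ if and only if $\pi^C_{P,j} = \pi^C_{P,k}$ holds for every $C \in E$. The right-hand side was just established, so $x_j$ and $x_k$ denote the same variable. Because $H$ is guarded by the lgig for $P$, the guard literal of $H$ is (a renaming of) $P(x_1, \dots, x_n)$, which places $x_j$ in position $j$ and $x_k$ in position $k$. Hence $\pi^H_{P,j} = x_j = x_k = \pi^H_{P,k}$, which is exactly the conclusion of the lemma.

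I do not expect any substantive obstacle here beyond bookkeeping: the argument is a direct unfolding of the definitions glued together by the lgig characterization. The one subtle point worth flagging is that the subsumption hypothesis $H \vdash E$ is not used explicitly in the chain of reasoning; it enters only implicitly, because the very choice of the lgig as guard is meaningful only in the context of subsuming $E$, and it is this choice (rather than subsumption itself) that forces the repetition pattern $x_j = x_k$ in the guard of $H$ at the relevant positions. The role of $Q$ in the statement is likewise only that of a witness tying $j$ and $k$ together through the shared value $\pi^C_{Q,i}$ in each clause $C$.
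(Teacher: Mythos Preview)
Your argument is correct. Both you and the paper start from the same core observation: if $j,k \in S^E_{P,Q,i}$ then $\pi^C_{P,j}=\pi^C_{Q,i}=\pi^C_{P,k}$ for every $C\in E$. From there the routes diverge slightly. The paper proceeds by contradiction: assuming $\pi^H_{P,j}\neq\pi^H_{P,k}$, it uses $H\vdash E$ to argue that the specialisation identifying these two variables still subsumes $E$, yielding a guard with strictly fewer variables and thus contradicting minimality of the lgig. You instead invoke the explicit characterisation of the lgig stated after Definition~4.3 (positions carry the same variable iff they carry the same term in every clause of $E$) and conclude $\pi^H_{P,j}=\pi^H_{P,k}$ directly. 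Your route is shorter and, as you correctly flag, does not actually use the hypothesis $H\vdash E$; the paper's contradiction argument is effectively re-deriving one direction of that characterisation on the fly. What the paper's version buys is self-containment (it appeals only to the defining minimality of the lgig rather than to the derived equivalence), while yours buys economy and makes transparent that the conclusion is really a property of the lgig and the shields alone.
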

\begin{proof}
  Suppose the statement fails for $Q(x_1,\dots,x_n) \in H$ at position $i$. That is, there are $j,k \in S^E_{P,Q,i}$ such that there are different variables at positions $j$ and $k$ in the guard $P$, i.e., $\pi^H_{P,j}\neq \pi^H_{P,k}$.
  Observe that if $j, k$ are both relative shields, then it holds that $\pi^C_{P,j} = \pi^C_{Q,i} = \pi^C_{P,k}$ for all $C \in E$. Therefore, for every $C\in E$ there exists a $\theta$ such that $H\theta \subseteq C$ where $\theta$ maps $\pi^C_{P,j}$  and $\pi^C_{P,k}$ to the same term. Thus, let $\theta'$ be the substitution that maps $\pi^C_{P,j} \mapsto \pi^C_{P,k}$ and everything else to identity. Clearly, $H\theta' \vdash E$ and the guard in $H\theta'$ has symbol $P$ and fewer distinct variables than the guard in $H$, hence contradicting the assumption of the guard in $H$ being the lgig.
\end{proof}

Since the above lemma uniquely determines the argument list of each literal occurring in an lgig guarded hypothesis, it also follows that no signed relation name can occur twice in the hypothesis. Hence, the hypothesis is also always \straight.

Another important property of the lgig is related to the following notion of specialisation.
We call a variable substitution $\theta$ a \emph{specialisation} if it maps from the variables $X$ to a subset $X' \subset X$ such that $\theta(x)=x$ for all $x \in X'$. The following lemma establishes not only that if there is a solution, there is also a solution guarded by the lgig, but that any guard in a solution can be specialised to be an lgig.

\begin{lemma}
\label{lem:specialize}
    Let $E$ be a set of \straight grounded clauses and let $H$ be a clause with guard literal $P(\bar{x})$ such that $H \vdash E$. There exists a specialisation $\theta$ such that $P(\bar{x})\theta$ is the lgig of $P$ induced by $E$.
\end{lemma}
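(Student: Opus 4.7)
The plan is to build the specialisation $\theta$ directly from the equality pattern that characterises the lgig. Fix, for each clause $C \in E$ that contains $P$, a substitution $\theta_C$ witnessing $H\theta_C \subseteq C$. Because $C$ is \straight, the signed name $P$ appears at most once in $C$, so the guard of $H$ must be mapped onto that unique $P$-atom; this forces $\theta_C(x_i) = \pi^C_{P,i}$ for every position $i$ of the guard. Define the equivalence relation $\sim$ on $\{1, \dots, \ar(P)\}$ by $i \sim j$ iff $\pi^C_{P,i} = \pi^C_{P,j}$ for every $C \in E$ containing $P$. By the characterisation noted immediately after the definition of the lgig, this is exactly the pattern of variable equalities of the lgig: its $i$-th and $j$-th positions share a variable iff $i \sim j$.

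The construction of $\theta$ is then the following. For each $\sim$-class, pick a representative position $j_0$ and take its variable $x_{j_0}$ as the canonical representative of the class. Set $\theta(x_j) := x_{j_0}$ whenever $j$ lies in the class of $j_0$, and $\theta(y) := y$ for every variable $y$ of $H$ not appearing in $\bar{x}$. To check that $\theta$ is a specialisation in the sense of the paper, I would verify that the image of $\theta$ is a subset of the variables of $H$ on which $\theta$ acts as the identity; by construction the image consists precisely of the chosen representatives together with all non-guard variables, and $\theta$ fixes each of them.

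The main obstacle is showing that $\theta$ is well-defined: if two guard positions $j, j'$ already carry the same variable in $H$ (so $x_j = x_{j'}$), they must be assigned the same representative. This is exactly where the witness substitutions $\theta_C$ do the work: from $x_j = x_{j'}$ we get $\theta_C(x_j) = \theta_C(x_{j'})$, hence $\pi^C_{P,j} = \pi^C_{P,j'}$ for every $C \in E$ containing $P$, so $j \sim j'$ and therefore $[j] = [j']$, giving the same representative. Once well-definedness is secured, the conclusion is read off the construction: in $P(\bar{x})\theta$, positions $i$ and $j$ share a variable exactly when $f([i]) = f([j])$, i.e., exactly when $i \sim j$. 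Thus $P(\bar{x})\theta$ has the same equality pattern as the lgig of $P$ induced by $E$ and is therefore that lgig up to renaming of its distinct variables, which is all that is meaningful given that the lgig is only defined up to isomorphism.
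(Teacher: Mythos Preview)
Your proof is correct and follows essentially the same approach as the paper: both arguments hinge on the observation that $x_i = x_j$ in the guard of $H$ forces $\pi^C_{P,i} = \pi^C_{P,j}$ in every $C \in E$ (via the unique $P$-atom in each straight clause), so the equality pattern of $P(\bar{x})$ refines that of the lgig, and hence a position-wise map to the lgig's variables is well-defined. Your version is in fact more explicit than the paper's about why the resulting map is a specialisation in the technical sense (you choose the lgig's variables as representatives among the $x_i$, whereas the paper writes $\theta\colon x_i \mapsto y_i$ and leaves implicit that the $y_i$ may be taken from the $x_i$); your clause about non-guard variables is harmless but vacuous, since $P(\bar{x})$ being a guard means every variable of $H$ already occurs in $\bar{x}$.
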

\begin{proof}
Let $P(\bar{y})$ be the lgig of $P$ induced by $E$.
    Recall that $y_i=y_j$ if and only if $\forall C \in E\colon \pi^C_{P,i} = \pi^C_{P,j}$. Observe that if two positions $i,j$ of $P$ differ in some clause in $E$ (and if $H \vdash E$), then the variables at the same positions of $P(\bar{x})$ in $H$ can not be the same. Otherwise the same variable would have to map to different terms to match the single occurrence of $P$ in the clause.
    
    With this established, we see that in $P(\bar{x})$ we have $x_i = x_j$ only if $y_i = y_j$ in the lgig $P(\bar{y})$. Thus, clearly the substitution $\theta: x_i \mapsto y_i$ for all $1 \leq 1 \leq \operatorname{ar}(P)$ is a specialisation and $P(\bar{x})\theta$ is the lgig of $P$ induced by $E$.
\end{proof}
From these key lemmas we can derive the following key theorem for the setting $\mathcal{G}^*$. A full proof of the missing details is given in the technical appendix.

\begin{theorem}
  \label{thm:key}
  Let $\expos$ be a set of \straight ground clauses, and let $\exneg$ be a set
  of ground clauses. If there exists a guarded clause $H$ such that
  $H\vdash \expos$ and $H \not\vdash \exneg$, then there also exists a
  guarded clause $H'$ with the following properties:
    \begin{enumerate}
        \item The guard of $H'$ is a least general induced guard (induced by $\expos$).
        \item $H'$ is a \straight clause.
        \item $H' \vdash \expos$ and $H' \not \vdash \exneg$
    \end{enumerate}
\end{theorem}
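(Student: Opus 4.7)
The plan is to convert the given $H$ into $H'$ by a single specialisation step via Lemma~\ref{lem:specialize}, and then use the uniqueness granted by Lemma~\ref{lem:uniqshield} to derive straightness. The three conditions are naturally nested: choosing the guard to be a least general induced guard (condition~1) pins down the argument list of every other literal (condition~2), while using a specialisation (rather than an arbitrary substitution) keeps subsumption of both example sets under control (condition~3).

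Concretely, let $P(\bar{x})$ be the guard of $H$. By Lemma~\ref{lem:specialize} there is a specialisation $\theta$ with $P(\bar{x})\theta$ equal to the lgig of $P$ induced by $\expos$, and I would set $H' = H\theta$. Because $\theta$ only identifies variables, the literal $P(\bar{x})\theta$ still covers every variable appearing in $H'$, giving condition~1. For condition~3, non-subsumption of the negative examples is immediate: $H'$ is a substitution instance of $H$, so any witness of $H' \vdash N$ composed with $\theta$ witnesses $H \vdash N$, and contrapositively $H \not\vdash \exneg$ yields $H' \not\vdash \exneg$. The delicate direction is $H' \vdash \expos$. Here I would reuse the key observation behind Lemma~\ref{lem:specialize}: $\theta$ merges two positions $i,j$ of the guard only when $\pi^C_{P,i} = \pi^C_{P,j}$ holds for every $C \in \expos$. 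Since each $C \in \expos$ is \straight, its unique $P$-literal forces any witness substitution $\sigma_C$ for $H \vdash C$ to satisfy $\sigma_C(x_k) = \pi^C_{P,k}$, and thus $\sigma_C$ already agrees on every pair collapsed by $\theta$. Consequently $\sigma_C$ factors through $\theta$ as a well-defined substitution $\sigma'_C$ on the variables of $H'$ with $H'\sigma'_C \subseteq C$.

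For condition~2, I would combine Lemmas~\ref{lem:shielding} and \ref{lem:uniqshield} applied to $H'$. Fix any literal $Q(x_1, \ldots, x_n) \in H'$ and any position $i$. Since $H'$ is guarded by the lgig, $x_i$ equals some $\pi^{H'}_{P,j}$, and Lemma~\ref{lem:shielding} forces $j \in S^{\expos}_{P,Q,i}$, which also shows that this relative shield is non-empty. Lemma~\ref{lem:uniqshield} then says that $\{\pi^{H'}_{P,j} : j \in S^{\expos}_{P,Q,i}\}$ is a singleton, so $x_i$ is uniquely determined by the lgig and the relative shield for $Q$ at position $i$. Hence any two literals of $H'$ sharing a signed relation name must have identical argument lists and coincide in the set representation of $H'$, so no signed relation name is repeated and $H'$ is \straight.

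The main obstacle is verifying that $H' \vdash \expos$ survives the specialisation: unlike the negative side, which is automatic, collapsing variables in the guard could a priori destroy a witness substitution. The \straight assumption on the positive examples is exactly what closes this gap, since it forces the image of the guard to be unique in each $C \in \expos$ and thereby ensures that any identification encoded in the lgig is already present in every witness substitution.
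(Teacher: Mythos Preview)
Your proposal is correct and follows essentially the same route as the paper: set $H' = H\theta$ for the specialisation $\theta$ of Lemma~\ref{lem:specialize}, argue $H' \vdash \expos$ via the fact that $\theta$ only collapses guard positions that already coincide in every positive example (so every witness substitution factors through $\theta$), get $H' \not\vdash \exneg$ from the trivial direction of substitution, and derive straightness from Lemma~\ref{lem:uniqshield}. Your treatment of straightness is in fact slightly more explicit than the paper's, since you invoke Lemma~\ref{lem:shielding} to certify that the relevant relative shield is non-empty before appealing to Lemma~\ref{lem:uniqshield}.
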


\subsection{A Polynomial-Time Algorithm}
\label{subsec:alg}

The results of Section~\ref{sec:det} show that we can use least general induced guards to fix a mapping of terms in a positive example to variables in the hypothesis. By Lemma~\ref{lem:uniqshield}, this mapping, in combination with a ground literal from the example, determines the only possible way to include a corresponding literal (with the same polarity and relation symbol) in the hypothesis.

Lemma~\ref{lem:uniqshield} is a crucial observation for our algorithm. The algorithm iteratively adds literals from a positive example to the hypothesis by replacing the terms there with variables. In general, the choice of how exactly terms are replaced by variables leads to combinatorial explosion. However, Lemma~\ref{lem:uniqshield} states that we can limit our search to one particular choice if the hypothesis is guarded by an lgig.

\begin{definition}
  \label{def:map}
  Let $E$ be a set of \straight ground clauses and let $H$ be a clause guarded by the lgig for $P$.
  Let $L$ be a literal in some clause of $E$. The \emph{lgig map} $\mu$ of $L$ is a function $\termsf(L) \to \vars(H)$ such that $\mu(t_i) = \pi^H_{P,j}$ where $j \in S^E_{P,L,i}$ and $t_i$ is the term at position $i$ of $L$.
\end{definition}

It follows from Lemma~\ref{lem:uniqshield} that such a map always exists -- and is uniquely determined -- if a literal with the same name and polarity as $L$ is part of a solution guarded by the lgig for $P$.  Note that no lgig map exists if the relative shield is empty at any position $i$.

With the above definition, we are now ready to present Algorithm~\ref{alg}, our polynomial time algorithm for the ILP consistency problem for $\mathcal{G}^*$. To avoid redundant checks throughout the algorithm we assume, w.l.o.g., that the instance is non-trivial, i.e., that $\expos \neq \emptyset$ and that all signed relation names occur in all clauses of $\expos$. Clearly, any signed relation name that does not occur in all clauses can never be part of the hypothesis, as it would be impossible to subsume a clause that does not contain the relation.

The algorithm operates by iteratively building hypotheses. For some arbitrary clause $C \in \expos$, the algorithm will try every signed relation name in the clause as a potential guard (line~\ref{line:guard}). For each such candidate, the lgig induced by $\expos$ is generated and added to a candidate hypothesis $H$. From this point on the algorithm will try to add further literals from $C$ to $H$ via their lgig map as long as they do not break subsumption of $\expos$. This either continues until $H$ is identified as a solution (if also $H \not \vdash \exneg$), or until all possible additions are exhausted. If no solution is found for any choice of $G$, then the algorithm rejects.

\begin{algorithm}[t]
  \SetKwInOut{Input}{input}\SetKwInOut{Output}{output}
  \SetKw{Reject}{Reject}
  \SetKw{Continue}{Continue}

  \Input{Examples $\expos\neq \emptyset$, \exneg as sets of grounded straight clauses.}
  \Output{Guarded clause $H$ s.t. $H \vdash \expos$ and $H \not\vdash \exneg$}

  $C \leftarrow $ an arbitrary clause in $\expos$\;
  \For{$G \in C$}{\label{line:guard}
    $G' \leftarrow$ be the least general induced guard for $G$ induced by $\expos$\;
    $H \leftarrow \{ G' \}$\;
    \If{$H \not\vdash \exneg$}{
      \Return $H$\;
    }

    \For{$L \in C \setminus \{G\}$}{
      \If{an lgig map of $L$ w.r.t. $G'$ exists}{
        $\mu \leftarrow $ lgig map of $L$ w.r.t. lgig $G'$\;
        
        \If{$H \cup \{\mu(L)\} \vdash \expos$}{
          $H \leftarrow H \cup \{ \mu(L) \}$\;
        }
      }
    }
    \If{$H \not\vdash \exneg$}{
      \Return $H$\;
    }
  }
  \Reject{}
  \caption{A polynomial time algorithm for ILP consistency for setting $\mathcal{G}^*$.}
  \label{alg}
\end{algorithm}

It is straightforward to observe the soundness of Algorithm~\ref{alg}. By definition, any returned clause $H$ is a guarded clause with $H \vdash \expos$ and $H \not \vdash \exneg$. What is left to show is that the algorithm is complete, i.e., that it will always return a hypothesis if the instance has a solution, and that it is tractable. Completeness follows from the observations on lgig maps made above and Theorem~\ref{thm:key}, showing that if a solution exists, then a \straight solution guarded by a lgig exists. A full argument is given in the technical appendix.

\begin{theorem}
\label{lem:algcomplete}
Algorithm~\ref{alg} is a correct algorithm for the ILP consistency problem for $\mathcal{G}^*$.
\end{theorem}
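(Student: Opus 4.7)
The plan is to verify soundness, completeness, and polynomial running time separately. Soundness is immediate by inspection of Algorithm~\ref{alg}: $H$ always contains the lgig guard $G'$, every literal added is gated by an explicit $\expos$-subsumption check, and $H$ is returned only after $H \not\vdash \exneg$ has also been verified. Polynomial running time follows from bounds on each component: the outer and inner loops each perform at most $|C|$ iterations; the lgig $G'$ and the lgig maps can be built in polynomial time in the input; and every subsumption check involves a clause $H$ guarded by $G'$, so Lemma~\ref{prop:kguardsubsumption} makes each test polynomial. The substantive task is completeness.

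For completeness I would work with the maximal candidate $H_{\max} := \{G'\} \cup \{\mu(L) : L \in C \setminus \{G\},\; \mu(L) \text{ is defined}\}$ that the inner loop effectively assembles for a given choice of $G$. The first step is to show $H_{\max} \vdash \expos$. Fix $E \in \expos$ and define the substitution $\theta_E$ sending each variable of $G'$ to the term at the corresponding position of the (unique, by straightness of $E$) $P$-literal in $E$; the characterization of the lgig in terms of agreement across all of $\expos$ is exactly what makes $\theta_E$ well-defined. For any $L \in C$ whose lgig map is defined, every $j \in S^{\expos}_{P,L,i}$ satisfies $\pi^E_{P,j} = \pi^E_{L,i}$ by the definition of relative shield, so $\mu(L)\theta_E$ equals the (again unique) $L$-literal in $E$; hence $H_{\max}\theta_E \subseteq E$. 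Once $H_{\max} \vdash \expos$ is established, every inner-loop check $H \cup \{\mu(L)\} \vdash \expos$ passes by inheritance (subsets are at least as general as their supersets under $\theta$-subsumption), so the algorithm's final $H$ equals $H_{\max}$. The second step invokes Theorem~\ref{thm:key} to obtain a \straight $H^*$ guarded by the lgig of some signed relation name $P$ with $H^* \vdash \expos$ and $H^* \not\vdash \exneg$. By the non-triviality assumption $P$ occurs in $C$; at the outer-loop iteration with the matching $G$, the guard $G'$ coincides with $H^*$'s guard up to renaming. Every non-guard literal $L^* \in H^*$ can be traced to some $L = L^*\theta^* \in C$ via the subsumption witness $\theta^*$, and Lemma~\ref{lem:uniqshield} together with Definition~\ref{def:map} pin it to $L^* = \mu(L)$. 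Thus $H^* \subseteq H_{\max}$, and combining with $H^* \not\vdash \exneg$ and the monotonicity of $\theta$-subsumption yields $H_{\max} \not\vdash \exneg$. The algorithm therefore returns $H_{\max}$ at this iteration.

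The principal obstacle will be the claim $H_{\max} \vdash \expos$. It is the linchpin of the whole argument and rests on the interplay between \straightness of positive examples---which ensures that for every $E \in \expos$ and every relation name in scope there is a unique target literal in $E$---and the relative shield definition of the lgig map, which guarantees that $\theta_E$ sends each $\mu(L)$ precisely onto that target literal. Without \straightness of $\expos$ or the use of the lgig as guard, the greedy construction would have to resolve genuine combinatorial choices and the argument would collapse.
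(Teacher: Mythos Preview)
Your proposal is correct and matches the paper's approach: soundness by inspection, completeness via Theorem~\ref{thm:key} to obtain an lgig-guarded solution $H^*$ that embeds into the clause the algorithm builds, then monotonicity of non-subsumption to conclude. The one minor difference is that you prove the slightly stronger claim $H_{\max}\vdash\expos$ directly (so the algorithm outputs exactly $H_{\max}$), whereas the paper only argues that literals of $H^*$ can always be added to the running $H$ without breaking $\expos$-subsumption; both rest on the same observation that the lgig guard forces a unique substitution into each positive example.
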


The worst-case time complexity of Algorithm~\ref{alg} depends on the complexity of computing the lgig, the lgig maps, and checking subsumption. Computing the lgig for literal $P$ is effectively in $O(\ar(P)^2 \cdot |\expos|)$, since it only requires the identification of positions that contain equal terms in all clauses. Similarly, from Lemma~\ref{lem:uniqshield} we see that the lgig map for a literal $L$ can be constructed directly from the respective relative shields. Again this can be done in time $O(\ar(L)^2 \cdot |\expos|)$, followed by a lookup of the relevant positions in the guard literal.
As stated previously in Lemma~\ref{prop:kguardsubsumption}, checking subsumption is indeed tractable for guarded formulas. Clearly, the hypothesis is guarded in all subsumption checks in Algorithm~\ref{alg} and therefore all of the checks are in polynomial time (actually even linear and in $\nctwo$, see~\cite{DBLP:journals/tcs/GottlobLS02}) .

\begin{theorem}
  The ILP consistency problem for $\mathcal{G}^*$ is in $\ptime$.
\end{theorem}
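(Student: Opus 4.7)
The plan is to combine the already-established correctness of Algorithm~\ref{alg} (Theorem~\ref{lem:algcomplete}) with a straightforward worst-case running-time analysis, so that the result reduces to bounding each subroutine of the algorithm by a polynomial and multiplying by the polynomially many loop iterations. The preceding paragraph in the paper essentially lays out the three cost centres I need to control: computation of an lgig, computation of an lgig map, and $\theta$-subsumption checks.

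First I would bound the cost of the basic building blocks. Computing the least general induced guard for a signed relation name $P$ reduces, by its characterisation via the equality pattern $\pi^C_{P,i} = \pi^C_{P,j}$ across all $C \in \expos$, to scanning each pair of positions once through the positive examples, giving a cost of $O(\ar(P)^2 \cdot |\expos|)$. For a literal $L$, Lemma~\ref{lem:uniqshield} guarantees that when the guard is an lgig each nonempty relative shield $S^{\expos}_{P,L,i}$ picks out a single variable of the guard, so the lgig map can be constructed in $O(\ar(L)^2 \cdot |\expos|)$ by computing the relative shields and looking up positions in $G'$. Finally, every subsumption check in the algorithm has the form $H \vdash C$ or $H \not\vdash C$ with $H$ guarded by $G'$, so Lemma~\ref{prop:kguardsubsumption} gives a polynomial bound (in fact linear time, and in $\nctwo$) for each such test.

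Next I would account for the loop structure. The outer loop ranges over the literals $G$ of one fixed clause $C \in \expos$, so it runs at most $|C|$ times. The inner loop ranges over $L \in C \setminus \{G\}$, again at most $|C|$ times, and performs a single lgig map construction and at most $|\expos| + |\exneg|$ subsumption tests. Multiplying the three polynomial factors (outer iterations, inner iterations, and per-iteration work) yields a total bound polynomial in the size of $\expos$ and $\exneg$, which together with Theorem~\ref{lem:algcomplete} gives the desired $\ptime$ membership.

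I do not expect a real obstacle here; the argument is bookkeeping once the earlier structural results are in place. The only point requiring a moment of care is that Lemma~\ref{prop:kguardsubsumption} is invoked legitimately at every subsumption test, which holds because $H$ is initialised as $\{G'\}$ for the lgig $G'$ and all subsequently added literals $\mu(L)$ use only variables from $\vars(G')$, so $G'$ remains a guard of $H$ throughout the execution.
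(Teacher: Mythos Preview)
Your proposal is correct and follows essentially the same approach as the paper: the paper's argument is the paragraph immediately preceding the theorem, which bounds the lgig, the lgig map, and the subsumption checks exactly as you do, and then appeals to Theorem~\ref{lem:algcomplete} for correctness. Your write-up is in fact more explicit than the paper's, spelling out the loop structure and the reason $H$ stays guarded, but there is no substantive difference.
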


\paragraph{Enumerating representative solutions}
While the decision problem is important for theoretical considerations, in practice the enumeration of solutions is of greater interest. In our setting $\mathcal{G}^*$, it is still possible to have exponentially many solutions. We propose considering the (subset) maximal lgig guarded solutions as the \emph{canonical solutions} of the problem. Note that there are only polynomially many such solutions.

This definition is motivated by Theorem~\ref{thm:key} and the preceding lemmas, which show that every solution is either guarded by a lgig or can be specialised to a solution guarded by a lgig. In this sense, our canonical  solutions are  representative solutions of the problem. It is then straightforward to adapt Algorithm~\ref{alg} to instead enumerate all of the canonical solutions in polynomial time. 

\paragraph{Guarded rules in Datalog$^\pm$}
Guarded rules are particularly popular in Datalog$^\pm$, where they guarantee decidability for query answering. Recall, a rule in Datalog$^\pm$ can contain existential quantification in the head. For example, the following Datalog $^\pm$ rule can express that every person $x$ has an ancestor $y$.
\[
   \exists y Ancestor(y, x) \leftarrow Person(x) 
\]
Hence, the notion of guardedness of a rule is adapted such that the guard must only cover all variables that are not existentially quantified.

This is not an issue for our algorithm. The analogue to learning a guarded Datalog$^\pm$ rule in our setting would be to learn Horn clauses where only the negative literals must be guarded. Any unguarded variables in the positive literal correspond to existentially quantified variables. By adapting the notion of lgig map accordingly for positive atoms (positions with no shield are mapped to fresh variables) and distinguishing positive and negative literals, Algorithm~\ref{alg} can easily be adapted to a polynomial time algorithm for learning guarded Datalog$^\pm$ rules.

\subsection{Generalisation to $k$-Guardedness}

We have shown that the restriction to guarded hypotheses greatly
reduces the complexity of the ILP consistency problem. While guarded clauses have been repeatedly shown to exhibit a
rare balance between expressiveness and low complexity for reasoning,
some important concepts are not expressible in guarded form. One
particularly important such case is transitive closure, e.g., we can construct the transitive closure $T$ over a relation $R$ for example with the rule
\[
  T(x,y), R(y, z) \rightarrow T(x, z)
\]
but not in terms of a guarded rule.

However, the clause is clearly 2-guarded, as are many other common
logical properties and -- from anecdotal observations -- as are many
rules derived from expert knowledge in industry.  In this section we will briefly discuss how our
results from Sections~\ref{sec:det} and \ref{subsec:alg} can be
generalised to also hold for $k$-guarded clauses, where $k$ is
constant.

Overall, no major changes are necessary to accommodate
$k$-guardedness. The results of~\ref{sec:det} rely on the fact that
the variables in the guard have a uniquely-determined assignment in
any clause that is subsumed by the guarded hypothesis. This
observation can be extended to the case of a set of $k$ literals that
$k$-guard the hypothesis by viewing them as one \emph{merged}
literal $G$ whose argument list is simply the concatenation of the
individual argument lists. That is, in the transitive closure rule
above we could consider the merged guard $G(x,y,y,z)$ instead of the
literals $\{\neg T(x,y), neg R(y, z)\}$ that guard the respective
clause. It is straightforward to verify that such a merged guard behaves exactly the same as any normal guard literal.

There are $\binom{|C|}{k} \leq |C|^k$ possible ways to replace
$k$ literals in a clause $C$ with such a merged guard. Hence, an
adapted Algorithm~\ref{alg} for $k$-guarded hypothesis would have to try only
$|C|^k$ possible guard candidates. The computation of the lgig and
lgig maps is also still polynomial since the arity of the merged guard
is simply the sum of the individual literal arities. By
Lemma~\ref{prop:kguardsubsumption} the subsumption checks remain
tractable under $k$-guardedness. We therefore can also state the following more general tractability result.

\begin{theorem}
  Let $\mathcal{G}^*_k$ be the ILP setting like $\mathcal{G}^*$ where
  the hypothesis language is relaxed to allow $k$-guarded
  clauses. Then the ILP consistency problem for $\mathcal{G}_k^*$ is
  in \ptime.
\end{theorem}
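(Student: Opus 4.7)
The plan is to adapt Algorithm~\ref{alg} to the $k$-guarded setting by simulating any set of $k$ candidate guard literals as a single \emph{merged} guard. Concretely, for a $k$-subset $\{L_1,\dots,L_k\}$ of literals in a clause I would define a fresh literal $G$ whose argument list is the concatenation of those of $L_1,\dots,L_k$. Since $\vars(G) = \bigcup_i \vars(L_i)$, the class of $k$-guarded hypotheses corresponds precisely to the class of hypotheses ordinarily guarded by some such merged guard. This reduces the problem to running (a variant of) the guarded algorithm after fixing a choice of $k$ literals.

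The first step is to lift the machinery of Section~\ref{sec:det} (relative shields, least general induced guards, lgig maps, and the specialisation argument of Lemma~\ref{lem:specialize}) to merged guards. Because a merged guard syntactically behaves like an ordinary literal, the definitions transfer verbatim. The key structural claims (Lemma~\ref{lem:shielding} and Lemma~\ref{lem:uniqshield}) only use that the guard occurs once in each positive example and that each position of the guard is a well-defined term, so they go through for the merged guard against the restriction that the chosen $k$-subset appears jointly in each positive example (which can be assumed as a preprocessing step analogous to the non-triviality assumption made for $\mathcal{G}^*$). Consequently Theorem~\ref{thm:key} generalises: if any $k$-guarded solution exists, there is one whose merged guard is a lgig and whose remaining literals are determined by their lgig maps with respect to that merged guard.

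Next I would modify Algorithm~\ref{alg} as follows. Pick an arbitrary positive example $C \in \expos$; instead of iterating over single literals $G \in C$, iterate over all $k$-subsets of $C$, of which there are at most $\binom{|C|}{k} \leq |C|^k$. For each subset, form the merged guard, compute its lgig induced by $\expos$, and then attempt to add the lgig map of every remaining literal of $C$ to the candidate hypothesis exactly as before. After termination, the merged guard in any successful hypothesis is unfolded back to the $k$ original literals. Soundness is immediate because the unfolded clause is by construction $k$-guarded and subsumes the positive and none of the negative examples, while completeness follows from the lifted version of Theorem~\ref{thm:key}.

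Finally I would verify polynomial runtime. The number of candidate merged guards is $|C|^k$, polynomial for constant $k$. The merged guard has arity at most $\sum_{L \in C} \ar(L)$, so computing its lgig and each lgig map costs $O\bigl((\sum_{L\in C}\ar(L))^2 \cdot |\expos|\bigr)$. Every subsumption check in the algorithm involves a $k$-guarded candidate hypothesis, which by Lemma~\ref{prop:kguardsubsumption} is tractable for constant $k$. Combining these bounds gives a polynomial-time algorithm, establishing the theorem. The main obstacle I anticipate is ensuring that the specialisation argument (Lemma~\ref{lem:specialize}) still works when a single variable appears across several of the $k$ constituent literals of the merged guard; however, since specialisation acts on the concatenated argument tuple rather than on the internal structure of $G$, the argument transfers without modification.
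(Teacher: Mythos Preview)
Your proposal is correct and follows essentially the same approach as the paper: merge the $k$ guard literals into a single literal by concatenating their argument lists, observe that all of the lgig machinery transfers verbatim to this merged guard, iterate over the $\binom{|C|}{k}\le |C|^k$ candidate $k$-subsets in Algorithm~\ref{alg}, and appeal to Lemma~\ref{prop:kguardsubsumption} for tractable subsumption checking. Your write-up is in fact more thorough than the paper's own sketch, explicitly noting the unfolding step and the need for the chosen $k$-subset to occur in every positive example.
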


\section{Conclusion \& Outlook}
\label{sec:conclusion}
We have studied the complexity of the ILP consistency problem when
learning guarded clauses.  In general, the problem is \np-complete, in
contrast to the case of learning Horn clauses with Horn examples,
which is \sigmatwo-complete. We then show how the restriction of
positive examples to \straight clauses allows for efficient learning
of guarded clauses, an important step for applications on large datasets. The tractability result also extends to the even more general class
$k$-guarded rules.

An interesting question that was left open is the integration of
background knowledge. Once we allow background knowledge with
variables we are confronted with conceptual issues on how whether and how the guardedness constraint should extend to background knowledge or not.

\bibliographystyle{acm}
\bibliography{refs}

\begin{thebibliography}{10}

\bibitem{DBLP:conf/pkdd/AlphonseO09}
{\sc Alphonse, {\'{E}}., and Osmani, A.}
\newblock Empirical study of relational learning algorithms in the phase
  transition framework.
\newblock In {\em Machine Learning and Knowledge Discovery in Databases,
  European Conference, {ECML} {PKDD} 2009, Bled, Slovenia, September 7-11,
  2009, Proceedings, Part {I}\/} (2009), W.~L. Buntine, M.~Grobelnik,
  D.~Mladenic, and J.~Shawe{-}Taylor, Eds., vol.~5781 of {\em Lecture Notes in
  Computer Science}, Springer, pp.~51--66.

\bibitem{DBLP:journals/pvldb/BellomariniSG18}
{\sc Bellomarini, L., Sallinger, E., and Gottlob, G.}
\newblock The vadalog system: Datalog-based reasoning for knowledge graphs.
\newblock {\em Proc. {VLDB} Endow. 11}, 9 (2018), 975--987.

\bibitem{DBLP:conf/pods/CaliGL09}
{\sc Cal{\`{\i}}, A., Gottlob, G., and Lukasiewicz, T.}
\newblock A general datalog-based framework for tractable query answering over
  ontologies.
\newblock In {\em Proceedings of the Twenty-Eigth {ACM} {SIGMOD-SIGACT-SIGART}
  Symposium on Principles of Database Systems, {PODS} 2009, June 19 - July 1,
  2009, Providence, Rhode Island, {USA}\/} (2009), J.~Paredaens and J.~Su,
  Eds., {ACM}, pp.~77--86.

\bibitem{DBLP:journals/ws/CaliGL12}
{\sc Cal{\`{\i}}, A., Gottlob, G., and Lukasiewicz, T.}
\newblock A general datalog-based framework for tractable query answering over
  ontologies.
\newblock {\em J. Web Semant. 14\/} (2012), 57--83.

\bibitem{DBLP:conf/ijcai/CropperDM20}
{\sc Cropper, A., Dumancic, S., and Muggleton, S.~H.}
\newblock Turning 30: New ideas in inductive logic programming.
\newblock In {\em Proceedings of the Twenty-Ninth International Joint
  Conference on Artificial Intelligence, {IJCAI} 2020\/} (2020), C.~Bessiere,
  Ed., ijcai.org, pp.~4833--4839.

\bibitem{DBLP:books/daglib/0082098}
{\sc Fitting, M.}
\newblock {\em First-Order Logic and Automated Theorem Proving, Second
  Edition}.
\newblock Graduate Texts in Computer Science. Springer, 1996.

\bibitem{DBLP:series/cogtech/FurnkranzGL12}
{\sc F{\"{u}}rnkranz, J., Gamberger, D., and Lavrac, N.}
\newblock {\em Foundations of Rule Learning}.
\newblock Cognitive Technologies. Springer, 2012.

\bibitem{DBLP:books/fm/GareyJ79}
{\sc Garey, M.~R., and Johnson, D.~S.}
\newblock {\em Computers and Intractability: {A} Guide to the Theory of
  NP-Completeness}.
\newblock W. H. Freeman, 1979.

\bibitem{DBLP:journals/tocl/GottlobGV02}
{\sc Gottlob, G., Gr{\"{a}}del, E., and Veith, H.}
\newblock Datalog {LITE:} a deductive query language with linear time model
  checking.
\newblock {\em {ACM} Trans. Comput. Log. 3}, 1 (2002), 42--79.

\bibitem{DBLP:conf/ilp/GottlobLS97}
{\sc Gottlob, G., Leone, N., and Scarcello, F.}
\newblock On the complexity of some inductive logic programming problems.
\newblock In {\em Inductive Logic Programming, 7th International Workshop,
  ILP-97, Prague, Czech Republic, September 17-20, 1997, Proceedings\/} (1997),
  N.~Lavrac and S.~Dzeroski, Eds., vol.~1297 of {\em Lecture Notes in Computer
  Science}, Springer, pp.~17--32.

\bibitem{DBLP:journals/tcs/GottlobLS02}
{\sc Gottlob, G., Leone, N., and Scarcello, F.}
\newblock Computing {LOGCFL} certificates.
\newblock {\em Theor. Comput. Sci. 270}, 1-2 (2002), 761--777.

\bibitem{DBLP:journals/jcss/GottlobLS02}
{\sc Gottlob, G., Leone, N., and Scarcello, F.}
\newblock Hypertree decompositions and tractable queries.
\newblock {\em J. Comput. Syst. Sci. 64}, 3 (2002), 579--627.

\bibitem{DBLP:journals/jacm/Grohe07}
{\sc Grohe, M.}
\newblock The complexity of homomorphism and constraint satisfaction problems
  seen from the other side.
\newblock {\em J. {ACM} 54}, 1 (2007), 1:1--1:24.

\bibitem{DBLP:journals/sigart/KietzD94}
{\sc Kietz, J., and Dzeroski, S.}
\newblock Inductive logic programming and learnability.
\newblock {\em {SIGART} Bull. 5}, 1 (1994), 22--32.

\bibitem{DBLP:conf/icml/KietzL94}
{\sc Kietz, J., and L{\"{u}}bbe, M.}
\newblock An efficient subsumption algorithm for inductive logic programming.
\newblock In {\em Machine Learning, Proceedings of the Eleventh International
  Conference, Rutgers University, New Brunswick, NJ, USA, July 10-13, 1994\/}
  (1994), W.~W. Cohen and H.~Hirsh, Eds., Morgan Kaufmann, pp.~130--138.

\bibitem{DBLP:journals/ngc/Muggleton91}
{\sc Muggleton, S.}
\newblock Inductive logic programming.
\newblock {\em New Gener. Comput. 8}, 4 (1991), 295--318.

\bibitem{DBLP:conf/alt/MuggletonF90}
{\sc Muggleton, S., and Feng, C.}
\newblock Efficient induction of logic programs.
\newblock In {\em Algorithmic Learning Theory, First International Workshop,
  {ALT} '90, Tokyo, Japan, October 8-10, 1990, Proceedings\/} (1990),
  S.~Arikawa, S.~Goto, S.~Ohsuga, and T.~Yokomori, Eds., Springer/Ohmsha,
  pp.~368--381.

\bibitem{DBLP:books/daglib/0018514}
{\sc Papadimitriou, C.~H.}
\newblock {\em Computational complexity}.
\newblock Academic Internet Publ., 2007.

\bibitem{DBLP:journals/pacmpl/RaghothamanMZNS20}
{\sc Raghothaman, M., Mendelson, J., Zhao, D., Naik, M., and Scholz, B.}
\newblock Provenance-guided synthesis of datalog programs.
\newblock {\em Proc. {ACM} Program. Lang. 4}, {POPL} (2020), 62:1--62:27.

\bibitem{DBLP:conf/ijcai/SiRHN19}
{\sc Si, X., Raghothaman, M., Heo, K., and Naik, M.}
\newblock Synthesizing datalog programs using numerical relaxation.
\newblock In {\em Proceedings of the Twenty-Eighth International Joint
  Conference on Artificial Intelligence, {IJCAI} 2019, Macao, China, August
  10-16, 2019\/} (2019), S.~Kraus, Ed., ijcai.org, pp.~6117--6124.

\bibitem{DBLP:books/cs/Ullman89}
{\sc Ullman, J.~D.}
\newblock {\em Principles of Database and Knowledge-Base Systems, Volume {II}}.
\newblock Computer Science Press, 1989.

\end{thebibliography}

\appendix
\setcounter{theorem}{12} %

\section{Full proofs for Section~\ref{sec:hardness}}

\begin{proof}[Proof of Lemma~\ref{lem:smallH}]
  Let $H$' be a solution for the instance $\expos, \exneg$ in the setting $\mathcal{G}$. We show that there exists a solution $H$ with $H \subseteq H'$ such that $|H| \leq 1+\sum_{E \in \exneg} |E|$. Note that any subset of $H$ clearly still subsumes $\expos$. We can therefore focus our construction on the non-subsumption of $\exneg$.

  We will construct $H$ from $H'$ iteratively along the negative examples $\{E_1, E_2, E_\ell\} \in \exneg$. To begin, let $G$ be a guard literal of $H'$, and let $H_0 = \{G\}$.
  Now, for $i$ from $1$ to $\ell$ in order, we consider the example $E_i$ and the question whether $H_{i-1} \not\vdash E_i$. If the subsumption does not hold, then we simply let $H_i = H_{i-1}$. In the case that $H_{i-1}$ does actually subsume $E_i$, we can make a key observation about guarded hypothesis.

  Since the guard contains all variables of $H_{i-1}$, any  variable substitution $\theta$  such that $H_{i-1}\theta \subseteq E_i$ corresponds to some mapping of $G$ into an occurrence of the signed relation symbol of the guard in $E_i$. Hence, there are at most $|E_i|$ such $\theta$. Let $\Theta_i$ be the respective set of substitutions such that $H_{i-1}\theta \subseteq E_i$ if and only if $\theta \in \Theta_i$. Since we know that $H' \not\vdash E_i$ and $G \in H'$, it follows that for every $\theta_j \in \Theta_i$, there is some literal $L_j \in H'$ such that $L_j\theta_j \not \in E_i$.

  Now let $H_i = H_{i-1} \cup \{ L_j \mid \theta_j \in \Theta_i\}$. From the argument above it follows that $H_i \not\vdash E_i$. Since for every possible variable substitution $\theta$, either $G\theta \not \in E_i$, or $\theta \in \Theta_i$ and thus also there is some $L_j \in H_i$ with $L_j \theta \not \in E_i$. 

  Let the final solution $H$ be the clause $H_\ell$ from the above construction. Since adding literals to the premise can never change non-subsumption to subsumption of a clause, we have $H \not\vdash \exneg$. As discussed above, from $H \subseteq H'$ it also follows that $H\vdash \expos$ and thus $H$ is a solution for the instance.

  What is left, is to show the size bound on $H$. Clearly, in every step from $H_{i-1}$ to $H_i$ for $1\leq i \leq \ell$, we have
  \[
    |H_i| \leq |H_{i-1}|+|\Theta_i| \leq |H_{i-1}|+|E_i|
  \]
  and therefore also $|H|=|H_k| \leq 1+\sum_{i=1}^\ell |E_i|$ (recall that $|H_0| = 1$).
\end{proof}

\section{Full proofs for Section~\ref{sec:alg}}

\begin{proof}[Proof of Lemma~\ref{lem:shielding}]
Suppose a atom $Q(x_1, \dots, x_n)$ in $H$ that violates the stated property. That is, there is some $i$ such that for all $j$ where $x_i=\pi^H_{P,j}$ we have $j \not \in S_{P,Q,i}$.

In consequence, for any $j$ such that $x_i = \pi^H_{P,j}$, there must be some clause $C\in E$ such that $j \not \in S^C_{P,Q,i}$. By definition then $\pi^C_{P,j} \neq \pi^C_{Q,i}$, i.e., the term at position $i$ in $Q$ does not match the term at position $j$ in $P$ in clause $C$. Since we assume that $x_i=\pi^H_{P,j}$ the same variable would need to map to $\pi^C_{P,j} $ as well as $\pi^C_{Q,i}$ to subsume the clause (recall, every relation symbol occurs only once). Hence $H$ does not subsume $C$ and we arrive a contradiction.
\end{proof}

The following two lemmas only serve to simplify part of the eventual proof of Theorem~\ref{thm:key}.

\begin{lemma}
  \label{thm:guarddet}
  Let $E$ be a set of \straight grounded clauses and $H$ a clause where the guard is the lgig of some relation $P$ induced by $E$, and $H \vdash E$.
  Then $H$ is a \straight clause.
\end{lemma}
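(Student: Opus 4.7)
The plan is to argue by contradiction. Suppose for contradiction that $H$ is not \straight; then some signed relation name $Q$ occurs on two distinct literals $L_1, L_2 \in H$ sharing the same polarity. The aim is to show that in fact $L_1 = L_2$ as literals, which contradicts our convention that a clause is a set of disjuncts with no repetitions.

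First I would establish that the relative shield $S^E_{P,Q,i}$ is non-empty at every position $i$ of $Q$. Because the guard literal, which is the lgig for $P$, covers every variable of $H$, each argument at position $i$ of $L_1$ must equal $\pi^H_{P,j}$ for some $j$. Lemma~\ref{lem:shielding} then forces this $j$ to lie in $S^E_{P,Q,i}$, so in particular $S^E_{P,Q,i} \neq \emptyset$. This is precisely the precondition required to apply Lemma~\ref{lem:uniqshield}.

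Next I would invoke Lemma~\ref{lem:uniqshield}: since the guard is the lgig of $P$ induced by $E$, all indices $j \in S^E_{P,Q,i}$ yield the same variable $\pi^H_{P,j}$. Consequently, the variable sitting at position $i$ of any $Q$-literal in $H$ is uniquely determined by $E$, the polarity and relation symbol $Q$, and the lgig itself, independently of whether we look at $L_1$ or $L_2$. Applying this at every position $i$ of $Q$ forces $L_1$ and $L_2$ to have identical argument lists, giving $L_1 = L_2$ and hence the desired contradiction. The edge case $Q = P$ is handled in the same manner: by the characterisation of the lgig (equal variables exactly in positions that agree across all of $E$), any literal with signed relation name $P$ in $H$ must coincide with the lgig itself.

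I do not anticipate any serious obstacle: the statement is essentially an immediate corollary of Lemma~\ref{lem:uniqshield}. The only subtle point is ensuring non-emptiness of the relevant relative shields before applying that lemma, and this is supplied by guardedness of $H$ together with Lemma~\ref{lem:shielding}.
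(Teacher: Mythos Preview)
Your proposal is correct and follows essentially the same route as the paper: both arguments derive the result directly from Lemma~\ref{lem:uniqshield}, observing that any two literals with the same signed relation name must have identical argument lists and hence coincide. Your version is in fact slightly more careful than the paper's, since you explicitly justify non-emptiness of $S^E_{P,Q,i}$ via guardedness and Lemma~\ref{lem:shielding} before invoking Lemma~\ref{lem:uniqshield}, a step the paper leaves implicit.
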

\begin{proof}
  The statement follows from Lemma~\ref{lem:uniqshield}. Any two occurrences of the same signed relation symbol must, by the lemma, necessarily have the same variables at every position. In other words there is only one possible allowed variable list for every signed relation symbol in the hypothesis. Recall that we do not consider clauses to have duplicate literals (redundant by idempotence) and thus $H$ is \straight.
\end{proof}

\begin{lemma}
\label{lem:negex}
    Let $C$ be a clause and $H$ a hypothesis such that $H \not \vdash C$. Let $\theta$ be some specialisation of $\vars(H)$. It holds that $H\theta \not\vdash C$.
\end{lemma}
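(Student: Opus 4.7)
The plan is to establish the contrapositive: assume $H\theta \vdash C$ and derive $H \vdash C$, which contradicts the hypothesis that $H \not\vdash C$. The argument rests on the elementary fact that $\theta$-subsumption is preserved under composition with a substitution applied to the subsumer.

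First, I would unfold the definition of $\theta$-subsumption on the assumption $H\theta \vdash C$: there exists a substitution $\sigma$ such that $(H\theta)\sigma \subseteq C$. Because substitution application is associative, $(H\theta)\sigma = H(\theta\sigma)$, where $\theta\sigma$ denotes the substitution obtained by first applying $\theta$ and then $\sigma$. The composition $\theta\sigma$ is itself a well-defined substitution on $\vars(H)$, and it witnesses $H \vdash C$ directly by the definition of $\theta$-subsumption. This contradicts the assumption $H \not\vdash C$, so we must have $H\theta \not\vdash C$, as required.

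The only (very minor) subtlety is in correctly handling the composition of substitutions and verifying that $(H\theta)\sigma = H(\theta\sigma)$ at the level of clauses viewed as sets of literals; this is immediate from the fact that substitution application acts term-wise on each literal, and then literal-wise on the clause. I expect no genuine obstacle here. Notably, the specialisation property of $\theta$ plays no role in the argument — any substitution would suffice — so the lemma is in fact a restricted instance of a more general monotonicity property of $\theta$-subsumption under substitution applied to the subsumer.
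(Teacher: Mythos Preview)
Your proof is correct and essentially identical to the paper's own argument: both proceed by contradiction, obtain a witnessing substitution $\sigma$ for $H\theta \vdash C$, and then observe that the composition $\theta\sigma$ witnesses $H \vdash C$. Your remark that the specialisation hypothesis on $\theta$ is unnecessary is also accurate.
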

\begin{proof}
  Proof is by contradiction. Suppose $H\theta\vdash C$, then by definition there exists a substitution $\mu$ such that $H\theta\mu \subseteq C$. Thus also $H (\mu \circ \theta) \subseteq C$, contradicting our initial assumption that $H \not \vdash C$.
\end{proof}

\begin{proof}[Proof of Theorem~\ref{thm:key}]
Let $\theta$ be the specialisation as in Lemma~\ref{lem:specialize}. The clause $H\theta = \{C\theta \mid C \in H\}$ will be the $H'$ as stated in the theorem. The first property follows directly from Lemma~\ref{lem:specialize}. We proceed to demonstrate the other properties. 

Let us refer to the lgig (obtained from the guard of $H$ and the specialisation $\theta$) as $P(\bar{y})$ and recall that $y_i=y_j$ if and only if $\forall C \in \expos \colon \pi^C_{P,i} = \pi^C_{P,j}$.
Hence, for every clause $C \in \expos$ and every substitution $\mu$ such that $H\mu \subseteq C$, if positions $i$ and $j$ are the same in the lgig, then $\mu$ will map $\pi^H_{P,i}$ and $\pi^H_{P,j}$ to the exact same terms.
Since $H$ is guarded, this determines the substitution for all literals in $H$. One can then observe that we always also have $H\mu = H\theta\mu$ since the theta either maps $x \mapsto y$ in the case above, where $\mu$ maps $x$ and $y$ to the same terms, or $x \mapsto x$ otherwise.
Since we have $H'= H\theta$, we also have  $H'\mu \subseteq C$.

In other words, for every clause $C \in \expos$ we have $H' \vdash C$ and thus also $H' \vdash \expos$. By Lemma~\ref{lem:negex} we also have $H\theta \not \vdash \exneg$, i.e., $H' \not \vdash \exneg$.
Finally, the fact that  $H'$ is \straight follows directly from properties 1 and 3, in combination with Lemma~\ref{thm:guarddet}.
\end{proof}

\begin{proof}[Proof of Theorem~\ref{lem:algcomplete}]
  Soundness is straightforward as discussed abvoe. Correctness of the algorithm follows form the following completeness argument. In particular, we show that if a solution exists for the input instance, then Algorithm 1 will return a solution (i.e., not reject).
  
  Assume that the instance has a solution. By Theorem~\ref{thm:key} we
  know that there also exists a solution $H'$ that is guarded by an
  lgig $J$. A literal with the same signed relation symbol as $J$
  clearly exists in every clause in $\expos$ and thus, the outer loop
  of the algorithm (line 2) will find our target lgig in line 3. Either another solution is returned before the loop reaches $J$ as the value for $G'$, in which case the algorithm will return a solution, or we continue the algorithm with lgig $G' = J$. For the rest of the argument we thus assume $G'=J$.

  By Lemma~\ref{lem:uniqshield}, for any signed relation name that occurs in $H'$, a lgig map
  exists and is uniquely determined. Hence the check at line 8 will succeed for every signed relation in $H'$. Furthermore, if the signed relation name of $L$ occurs in $H'$, then always $\mu(L)\in H'$, i.e., the lgig map will always map the literals to the exact corresponding  literals in $H'$.
  We claim that in consequence, the set $H$ constructed by the algorithm, by the loop in lines 7-11, will always be a superset of $H'$ (modulo isomorphism). Thus, since $H' \not\vdash \exneg$ by assumption, we also have $H\not\vdash \exneg$ and the algorithm will return $H$.

  To argue the claim, we only need to observe that any addition of literals to $H$ in line 11 can never block adding the literals of $H'$ from being added. That is, for any $P \in H'$, and any $H$ guarded by $J$, if $H \vdash \expos$, then also $\{P\} \cup H \vdash \expos$ (and thus the check at line 10 will always succeed in adding literals from $H'$ to $H$).
  This can be seen by observing that there exists exactly one substitution $\theta$, determined by the guard, such that a $H\theta \subset C$ for any straight clause $C \in \expos$. Since $H$ and $H'$ share the same guard $J$, and both subsume $\expos$, it follows that this variable substitution is the same for both of them on each clause in $\expos$, i.e., $H \cup H' \vdash \expos$ and thus also $\{P\} \cup H \vdash \expos$.
\end{proof}

 \end{document}